\title{Monadic Second-Order Logic with Path-Measure Quantifier is
  Undecidable\footnote{We became aware that this result was already proved by Mikołaj Bojańczyk, Edon Kelmendi and Michał
Skrzypczak through a work that was conducted independently of
ours~\cite{bojanczyk2019undecidable}. Our proof technique relies on probabilistic automata, which differs from the approach of Bojańczyk et
al.}
} 
\author{Rapha\"el Berthon}{Universit\'e libre de Bruxelles, Belgium\ }{}{}{}
\author{Emmanuel Filiot}{Universit\'e libre de Bruxelles, Belgium\ }{}{}{}
\author{Shibashis Guha}{Universit\'e libre de Bruxelles, Belgium\ }{}{}{}
\author{Bastien Maubert}{Universit\`a degli Studi di Napoli Federico II, Italy\ }{}{}{}
\author{Aniello Murano}{Universit\`a degli Studi di Napoli Federico
  II, Italy\ }{}{}{}
\author{Laureline Pinault}{ENS Lyon, France\ }{}{}{}
\author{Jean-Fran\c{c}ois Raskin}{Universit\'e libre de Bruxelles, Belgium\ }{}{}{}
\author{Sasha Rubin}{Universit\`a degli Studi di Napoli Federico II, Italy\ }{}{}{}
\authorrunning{R. Berthon et al.}
\keywords{MSO, tree automata, $\omega$-regular conditions, almost-sure semantics}
\begin{document}

\maketitle

\begin{abstract}
We consider an extension of monadic second-order logic, interpreted over the infinite binary tree, 
by the qualitative path-measure quantifier. 
This quantifier says that the set of infinite paths in the tree satisfying a 
formula has Lebesgue-measure one. We prove that this logic is undecidable. 
To do this we prove that the emptiness problem 
of qualitative universal parity tree automata is undecidable. Qualitative means 
that a run of a tree automaton is accepting if the set of paths in the run that 
satisfy the acceptance condition has Lebesgue-measure one.  
\end{abstract}

\section{Introduction}
Monadic Second-Order logic (MSO) is an extension of first-order logic
with quantification on sets. The fundamental result about MSO is that
its theory on 
the infinite binary tree is decidable~\cite{Rabin:TAMS69}. There are a 
number of ways to extend this result: to structures generated by certain 
operations (see the survey~\cite{Thomas:MFCS03}), by certain additional unary 
predicates~\cite{Fratani:TCS12}, and by certain generalised 
quantifiers~\cite{Rabinovich:FI10,michalewski2018monadic}. In this note we 
consider \MSOzeropath, the extension of \MSO by the measure-theoretic quantifier $\forallonep 
X$, introduced in~\cite{michalewski2016measure}. Here, 
$\forallonep X. \varphi$ 
states that the set 
of paths of the tree that satisfy $\varphi$ has Lebesgue-measure equal to $1$. This 
means, intuitively, that a random path almost surely satisfies $\varphi$, 
where a random path is generated by repeatedly flipping a fair coin to 
decide to go to left or right. A weak fragment of this logic is known
to be
decidable~\cite{bojanczyk2016thin,bojanczyk_et_al:LIPIcs:2017:7474},
and a more general one (in which the measure quantifier ranges over all sets
instead of just paths) was proved undecidable
in~\cite{michalewski2018monadic}. 

We prove that \MSOzeropath is undecidable by encoding the emptiness problem of qualitative universal parity 
tree automata (the encoding is direct). Such an automaton accepts a tree $t$ if every run $\rho$ on $t$ has the property that the 
Lebesgue-measure of the set of branches of $\rho$ satisfying the parity condition is equal to $1$. Thus, 
the main technical contribution of this note is that this emptiness problem is undecidable (Theorem~\ref{theo-stage}). 


\section{Preliminaries}
\label{sec-prelim}

Given a finite non-empty set $\Sigma$, called an alphabet, we write $\Sigma^*$ and $\Sigma^\omega$
for the set of finite and infinite words over $\Sigma$,
respectively. For a finite word $w=a_0\dots a_{n-1}$ we write $|w|=n$ for
its \emph{length}, and if $w$ is an infinite word we let
$|w|=\omega$. For a word $w$ and index $i < |w|$, we let $w_i$ be
the letter at position $i$ in $w$.
For a finite word $w\in\Sigma^*$, the set $\cone(w)=w\cdot \Sigma^\omega$ 
of infinite words is called the \emph{cone} of $w$. 
For a function $f:A\rightarrow B$,
we write its \emph{codomain} ${\sf codom}(f)=\{f(a)\mid a\in A\}$.

The set $\{0,1\}^*$ is called the \emph{infinite binary tree}. 
A \emph{$\Sigma$-tree} (or simply \emph{tree}) is a mapping $\tree:\{0,1\}^*\to\Sigma$. We write $\Trees$ for the set of $\Sigma$-trees. 
The elements of $\{0,1\}^*$ are called \emph{nodes}. 
We call $\epsilon$ the \emph{root} node, and for every node
$\node\in\{0,1\}^*$, $\node\cdot 0$ and $\node\cdot 1$ are called the
\emph{children} of $\node$.
A \emph{branch} in a tree is an infinite sequence of
nodes $\node_0 \node_1 \node_2\ldots$ such that $\node_0=\epsilon$ and for all $i\geq
0$, $\node_{i+1}$ is a child of $\node_i$. Alternatively, a branch $\node_0\node_1\node_2\ldots$ can be seen as an infinite word
$\branch=\lim_{i\rightarrow \infty}\node_i\in\{0,1\}^\omega$.
This way, each node $\node$ induces a cone $\cone(\node)=\node\cdot
\{0,1\}^\omega$ of branches.
Finally, given a branch $\branch\in\{0,1\}^\omega$, we let
$\treelim(\branch)=\tree(\epsilon)\tree(\branch_0)\tree(\branch_0\branch_1)\ldots$
be the sequence of labels along this branch, i.e., we lift $t$ to be a function $t:\{0,1\}^\omega \to \Sigma^\omega$.






Next, we recall various kinds of automata on words and trees
that involve probabilistic aspects: in their transitions and/or their
acceptance conditions.

\subsection{Probabilistic word automata}
\label{sec-word-automata}

A \emph{probabilistic word automaton} $\wauto$ is a tuple
$(Q,\Sigma,\delta,\state_\init,\acc)$ where
\begin{itemize}
\item $Q$ is a finite set of states,
\item $\Sigma$ is an alphabet,
\item $\delta : Q\times \Sigma \times Q\rightarrow [0,1]$ is a probabilistic
transition function, \ie for all $q\in Q$ and $\sigma\in\Sigma$, we
have $\sum_{p\in Q} \delta(q,\sigma,p) = 1$,
\item $\state_\init$ is the initial state, 
\item and $\acc\subseteq Q^\omega$ is an acceptance condition.
\end{itemize}

A \emph{run}   $\run$ of $\wauto$ on $w\in\Sigma^\omega$ is an infinite word over $Q$ such that
$\run_0 = \state_\init$ and for all $i\geq 0$,
$\delta(r_i,w_i,r_{i+1})>0$. A run $\run$ is \emph{accepting} if
$\run\in\acc$. We write $\Runs[\wauto]{w}$ and $\AccRuns[\wauto]{w}$ for the sets of
runs and accepting runs, respectively, of $\wauto$ on $w$.
We recall here certain $\omega$-regular acceptance conditions~\cite{perrin2004infinite}, i.e., 
B\"uchi, co-B\"uchi, Rabin and parity. 
We denote by $\infr$ the set of states that are visited infinitely often along the run $\run$.
The B\"uchi and co-B\"uchi acceptance conditions are given in terms of a set of \emph{accepting} states $\alpha \subseteq Q$.
A run $\run$ is accepting for the B\"uchi acceptance condition iff $\infr \cap \alpha \neq \emptyset$;
and a $\run$ is accepting for the co-B\"uchi acceptance condition iff $\infr \cap \alpha = \emptyset$.
The Rabin acceptance condition is given in terms of Rabin pairs $\{\zug{\alpha_1, \beta_1}, \dots, \zug{\alpha_k, \beta_k}\}$ for some $k \in \setN$ with $\alpha_i, \beta_i \subseteq Q$, and a run $\run$ is accepting if for some $1 \le i \le k$, we have that $\infr \cap \alpha_i \neq \emptyset$ and $\infr \cap \beta_i = \emptyset$.
The parity acceptance condition is defined by a parity function $\alpha: Q \mapsto \{0,1,\dots, k\}$ for some $k \in \setN$.
A run is accepting for the parity condition iff $\displaystyle{\min_{q \in \infr}}\{\alpha(q)\}$ is even, that is, the minimum value seen infinitely often is even.

For every word $w\in\Sigma^\omega$, the automaton induces
a probability distribution $\muw$ on $\Runs{w}$, via cones, $\sigma$-algebras and
 Carath\'eodory's unique extension theorem (see
 \eg\cite{bauer2011measure,carayol2014randomization} for more details).
For convenience we will write $\wauto(w)=\muw(\AccRuns[\wauto]{w})$,
and we call $\wauto(w)$ the \emph{value} of $\wauto$ on $w$.
 While nondeterministic (resp. universal) automata accept a word if
some (resp. every) run is accepting, probabilistic automata allow
for more involved semantics that depend on the value of the automaton
on its input.


\paragraph*{Probable semantics.} 
A word $w$ is \emph{probably accepted} by $\wauto$ if it is accepted
with non-zero probability, \ie $\wauto(w)>0$. The language
$\plang(\wauto)$ is the set of words $w\in\Sigma^\omega$ that
are probably accepted by $\wauto$. The \emph{emptiness problem} for
$\wauto$ with probable semantics asks
whether $\plang(\wauto)=\varnothing$.

\paragraph*{Almost-sure semantics.} 
A word $w$ is \emph{almost-surely
accepted} by $\wauto$ if the set of accepting runs has measure 1, 
that is, $\wauto(w)=1$.
The language $\aslang(\wauto)$ is the set of words
$w\in\Sigma^\omega$ that are almost-surely accepted by $\wauto$.  The
\emph{emptiness problem} for  $\wauto$ with almost-sure semantics asks whether
$\aslang(\wauto)=\varnothing$.


\subsection{Tree automata}
\label{sec-tree-automata}

We first recall non-probabilistic tree automata together with their
classical semantics and the
recent \emph{qualitative
  semantics} of~\cite{carayol2014randomization}.

A \emph{tree automaton} is a tuple $\auto=(\setstates,\Sigma,\trans,\state_\init,\acc)$ where:
\begin{itemize}
\item $\setstates$ is a finite set of states,
\item $\Sigma$ is a finite alphabet,
\item $\trans\subseteq \setstates\times \Sigma\times \setstates\times \setstates$ is a
  transition relation, 
\item $\state_\init\in \setstates$ is an initial state,
\item and $\acc\subseteq Q^\omega$ is an acceptance condition.
\end{itemize}

A \emph{run} of $\auto$ on a $\Sigma$-tree $\tree$ is a $\setstates$-tree $\run$ such that:
\begin{itemize}
\item  $\run(\varepsilon) = \state_\init$
\item $\forall \node\in \words$, we have $(\run(\node),\tree(\node),\run(\node\cdot 0),\run(\node\cdot 1))\in \trans$
\end{itemize}

A branch $\branch\in\{0,1\}^\omega$ of a run $\run$ is \emph{accepting} if
$\treelim[\run](\branch)\in\acc$, and a  run  is accepting if  all its
branches are accepting. 
A run $\run$ is \emph{qualitatively accepting} if 
\[\mu(\{\branch\in\{0,1\}^\omega \mid \treelim[\run](\branch) \in \acc\})=1,\] where
$\mu$ is the coin-flipping probability measure defined on
cones as follows: for $\node\in\{0,1\}^*$, $\mu(\cone(\node))=\frac{1}{2^{|\node|}}$
(see~\cite{bauer2011measure,carayol2014randomization,bojanczyk2016thin}
for more details).

\paragraph*{Tree languages}
We define the \emph{qualitative nondeterministic}
language of a tree automaton $\auto$ as follows:
\begin{align*}
\qelang(\auto)&=\{\tree\pipe \exists \run\, \ist
\run\text{ is a run of }\auto \text{ on }\tree \text{ and }\run\text{ is qualitatively accepting}\}.
\end{align*}

Similarly, we define the \emph{qualitative universal}
language of $\auto$ as follows:
\begin{align*}
  \qulang(\auto)&=\{\tree\pipe \forall \run\, \ist
\run\text{ is a run of }\auto \text{ on }\tree, \run\text{ is qualitatively accepting}\}.
\end{align*}

\subsection{Probabilistic tree automata}


We now recall the probabilistic tree automata introduced
in~\cite{carayol2014randomization}.

A \emph{probabilistic tree automaton} is a tuple $\auto=(\setstates,\Sigma,\delta,\state_\init,\acc)$ where:
\begin{itemize}
\item $\setstates$ is a finite set of states,
\item $\Sigma$ is a finite alphabet,
\item $\delta: \setstates\times \Sigma\times \setstates\times
  \setstates\to [0,1]$ is a probabilistic
  transition function,  \ie for all $q\in Q$ and $\sigma\in\Sigma$, we
have $\sum_{q_0,q_1\in Q} \delta(q,\sigma,q_0,q_1) = 1$,
\item $\state_\init\in \setstates$ is an initial state,
\item and $\acc\subseteq Q^\omega$ is an acceptance condition.
\end{itemize}

A \emph{run} of a probabilistic tree automaton $\auto$ on a $\Sigma$-tree $\tree$ is  a $Q$-tree
$\run$ such that the root is labelled with $q_\init$ and for every
$\node\in\words$, it holds that
$\delta(\run(\node),\tree(\node),\run(\node\cdot 0),\run(\node\cdot 1))>0$. Accepting and
qualitatively accepting runs are defined as before, and the set of runs of
$\auto$  (resp. accepting runs and qualitatively accepting runs) on input tree $t$ is written
$\Runs{t}$ (resp. $\AccRuns{t}$ and $\QualAccRuns{t}$). 
Given a tree $\tree$, one can define a probability measure $\mut$ on
the space of runs (see~\cite{carayol2014randomization}).

\begin{remark}
  \label{rem-measure-runs-trees}
  The definition of runs for probabilistic tree automata in~\cite{carayol2014randomization} allows for
  transitions with probability zero, while we disallow them. But  the
set $R_0$ of all  runs that contain at least one such transition  is a
countable union of cones of partial runs
 of measure zero (this follows directly from the definitions of
 partial runs and cones of runs and their measures, see~\cite[Section
 4.1.1]{carayol2014randomization} for details).
 Therefore $R_0$
has measure zero, and the restriction of the probability measure on
$\Runs{\tree}\cup R_0$ to $\Runs{\tree}$   is a probability measure on $\Runs{\tree}$.
\end{remark}

We define the \emph{almost-sure} and \emph{qualitative almost-sure}
languages of $\auto$ as follows:
\begin{align*}
  \aslang(\auto)&=\{\tree\pipe \mut(\AccRuns{t})=1\}.\\
  \qaslang(\auto)&=\{\tree\pipe \mut(\QualAccRuns{t})=1\}.\\
\end{align*}

As shown in~\cite{carayol2014randomization}, acceptance of trees 
for the qualitative almost-sure semantics can be characterised via
Markov chains, which will be useful later on. 

\begin{definition}
  \label{def-markov-chain}
  A \emph{Markov chain} is a tuple $\mkc=(S,s_\init,\delta,\acc)$
  where
  \begin{itemize}
  \item   $S$
  is a countable set of states,
\item  $s_\init$ is an initial state, 
\item   $\delta:S\times S \to [0,1]$ is a probabilistic transition function
  such that for all $s\in S$, we have $\sum_{s'\in S}\delta(s,s')=1$, and
\item   $\acc\subseteq S^\omega$ is an \emph{objective}.
  \end{itemize}
\end{definition}

A run is an infinite sequence of states, and $\mkc$ induces a
probability measure on runs.
A Markov chain $\mkc$ \emph{almost-surely fulfils} its objective if
the set of runs in $\acc$ has measure one.

\begin{definition}
  \label{def-mk-chain-acc}
  Given a probabilistic tree
  automaton $\auto= (Q,\Sigma, \delta,q_\init,\acc)$  and  a
  $\Sigma$-tree $\tree$, we define the (infinite) Markov chain
  $\mkcAt{\auto}=(S,s_\init,\delta',\acc')$ where:
  \begin{itemize}
  \item $S=Q\times \words \cup Q\times Q \times Q \times \words$,
  \item $s_\init=(q_\init,\epsilon)$,
  \item for all $q,\node,q_0,q_1$,
    \begin{itemize}
    \item $\delta'((q,\node),(q,q_0,q_1,\node))=\delta(q,\tree(\node),q_0,q_1)$,
    \item  $\delta'((q,q_0,q_1,\node),(q_0,\node\cdot
 0)=\delta'((q,q_0,q_1,\node),(q_1,\node\cdot 1)=\frac{1}{2}$, and
    \item  $\delta'(s,s')=0$ in all other cases;
    \end{itemize}
  \item $\acc'$ is inherited from $\acc$: a run is in $\acc'$ if, after
  removing states of the form $(q,q_0,q_1,x)$ and projecting states of
  the form $(q,x)$ on $Q$, we obtain a run in $\acc$. 
\end{itemize}

\end{definition}

The following result is established in~\cite[Proposition 45]{carayol2014randomization}.

\begin{proposition}
  \label{prop-45}
  Let $\auto$ be a probabilistic tree automaton with $\omega$-regular
  acceptance condition, and let $\tree$ be a tree. It holds that
  $\tree\in\qaslang(\auto)$ iff $\mkcAt{\auto}$ almost-surely fulfils its objective.
\end{proposition}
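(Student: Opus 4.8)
The plan is to introduce, for each run $\run$ of $\auto$ on $\tree$, the quantity
\[ p(\run)\;:=\;\mu\bigl(\{\branch\in\{0,1\}^\omega \mid \treelim[\run](\branch)\in\acc\}\bigr)\;\in\;[0,1], \]
so that, by definition of the qualitative acceptance condition, $\run$ is qualitatively accepting if and only if $p(\run)=1$. I would then prove that the following three statements are equivalent: (i) $\tree\in\qaslang(\auto)$; (ii) $\int_{\Runs{t}} p(\run)\,d\mut(\run)=1$; (iii) $\mkcAt{\auto}$ almost-surely fulfils its objective.

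The equivalence of (i) and (ii) is pure measure theory. By definition $\tree\in\qaslang(\auto)$ iff $\mut(\QualAccRuns{t})=1$, and $\run\in\QualAccRuns{t}$ iff $p(\run)=1$; so (i) states exactly that $p=1$ holds $\mut$-almost everywhere. Since $p$ is measurable and $[0,1]$-valued, $\int p\,d\mut=1$ implies $\int(1-p)\,d\mut=0$ with a non-negative integrand, hence $p=1$ $\mut$-almost everywhere; the converse implication is immediate.

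For (ii)~$\Leftrightarrow$~(iii) I would analyse the randomness along a run of $\mkcAt{\auto}$. After deleting the auxiliary states of the form $(q,q_0,q_1,x)$, such a run visits states $(q_\init,\epsilon),(q^{(1)},u^{(1)}),(q^{(2)},u^{(2)}),\dots$, where the $u^{(n)}$ are the successive prefixes of a branch $\branch$ and each $q^{(n)}\in Q$; this is because at a state $(q,\node)$ the chain first picks a pair of children states with probability $\delta(q,\tree(\node),q_0,q_1)$ and then performs an \emph{independent} fair coin flip to move to $(q_0,\node\cdot 0)$ or to $(q_1,\node\cdot 1)$. I claim that this data is distributed exactly as $(\branch,\treelim[\run](\branch))$, where $\run$ is drawn according to $\mut$ and, independently, $\branch$ according to $\mu$. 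This is checked on cylinders, using the construction of $\mut$ from cones of partial runs in~\cite{carayol2014randomization}: for a word $u$ of length $n$ and states $q_0=q_\init,q_1,\dots,q_n$, the probability that $\branch$ starts with $u$ and $q^{(i)}=q_i$ for $i\le n$ equals, on both sides, $\tfrac{1}{2^{n}}$ times the product over $i<n$ of the relevant marginal of $\delta$ (obtained by summing $\delta$ over the state of the child not lying on $\branch$); Carath\'eodory's uniqueness theorem then upgrades this to equality of the two laws. Since the objective $\acc'$ is, by construction, determined solely by the $Q$-projection $q_\init q^{(1)} q^{(2)}\cdots$ of the visited states, it follows that $\lambda(\acc')=(\mut\otimes\mu)\bigl(\{(\run,\branch)\mid\treelim[\run](\branch)\in\acc\}\bigr)$, where $\lambda$ is the probability measure that $\mkcAt{\auto}$ induces on its runs. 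As $\acc\subseteq Q^\omega$ is Borel (it is $\omega$-regular) and $(\run,\branch)\mapsto\treelim[\run](\branch)$ is measurable, Tonelli's theorem rewrites the right-hand side as $\int_{\Runs{t}} p(\run)\,d\mut(\run)$. Hence (iii), which says $\lambda(\acc')=1$, is equivalent to (ii), and chaining with (i)~$\Leftrightarrow$~(ii) proves the proposition.

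The step I expect to be the main obstacle is this last identification of the law of $\mkcAt{\auto}$ (restricted to the $Q$-labels and branch it produces) with the push-forward of $\mut\otimes\mu$ under $(\run,\branch)\mapsto(\branch,\treelim[\run](\branch))$. It rests on having the precise definition of the run-space measure $\mut$ from~\cite{carayol2014randomization} at hand, so that its marginal along a fixed branch is exactly the claimed marginal of $\delta$, and on the (routine) measurability check that makes Tonelli's theorem applicable. Once the cylinder probabilities are seen to coincide, the remainder is bookkeeping.
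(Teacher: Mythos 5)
Your proposal is sound, but note that the paper itself does not prove this proposition: it imports it verbatim as Proposition~45 of~\cite{carayol2014randomization}, so there is no in-paper argument to compare against. Your write-up is a correct self-contained derivation of the cited fact. The decomposition through the intermediate statement $\int_{\Runs{t}} p(\run)\,d\mut(\run)=1$ is the natural one: the equivalence with $\tree\in\qaslang(\auto)$ is just the standard fact that a $[0,1]$-valued measurable function with integral one equals one $\mut$-almost everywhere, and the equivalence with almost-sure fulfilment of $\acc'$ rests on your identification of the law of the chain's data (branch taken, together with the $Q$-labels seen along it) with the pushforward of $\mut\otimes\mu$ under $(\run,\branch)\mapsto(\branch,\treelim[\run](\branch))$. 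Your cylinder computation is right: on both sides the probability of ``branch starts with $u$ and the states along it are $p_0,\dots,p_n$'' is $2^{-n}$ times the product of the one-step marginals of $\delta$ in the direction prescribed by $u$ (the off-branch child being summed out, in the chain by construction and under $\mut$ by marginalising the partial-run cones), and these cylinders form a $\pi$-system generating the relevant $\sigma$-algebra, so the two laws agree. Since $\acc'$ depends only on the $Q$-projection and $\acc$ is $\omega$-regular hence Borel, Tonelli indeed converts the chain's acceptance probability into $\int p\,d\mut$, closing the chain of equivalences; the measurability points you flag (of $p$ and of $(\run,\branch)\mapsto\treelim[\run](\branch)$) are routine, exactly as you say. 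The only dependency worth making explicit is the precise cone-based construction of $\mut$ from~\cite{carayol2014randomization} (together with Remark~\ref{rem-measure-runs-trees}, which lets you work on $\Runs{t}$ after discarding the measure-zero set of runs using zero-probability transitions); granting that, your argument is complete and is in the same spirit as the proof in the cited work.
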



\section{$\qulang$-emptiness is undecidable for parity tree automata}
\label{sec-undec-automata}

In this section we prove our main undecidability result on tree
automata, from which we will derive the undecidability on \MSO in
Section~\ref{sec-MSO}. The undecidability result on tree automata
comes from some undecidability result on word automata. In a few
words, undecidability of the almost-sure emptiness problem was known
to be undecidable for Rabin word automata~\cite{BGB12}. We strengthen this
result to parity word automata with binary branching (for every input letter, each state has exactly
two outgoing transitions with $\frac{1}{2}$ probability). Then, we exploit this result to show undecidability of the
emptiness problem for qualitative parity tree automata. The main lines
 of this proof were 
sketched in an internship report~\cite{PFS14}. The main addition 
we bring is to prove that we can indeed restrict attention to automata
with binary branching, a central assumption  in the proof of the main
result that was not justified in~\cite{PFS14}. 
Further, we do not know whether this assumption can be made in the case of
co-Büchi as claimed in~\cite{PFS14}, but we prove that for Rabin and parity automata one can
indeed assume binary branching while retaining undecidability of the
emptiness problem.
\subsection{Restricting to binary branching}
\label{sec-binary-branching}

We recall the notion of simple automata considered
in~\cite{gimbert2010probabilistic}, and introduce its restriction to
\emph{binary branching}, and a more general class
of \emph{semi-simple} automata, whose emptiness problem  we  prove to
be reducible to the emptiness problem for binary-branching automata.

\begin{definition}
  \label{def-binary-automata}
  A probabilistic word automaton
  $\wauto=(Q,\Sigma,\delta,q_\init,\acc)$ is:
  \begin{itemize}
  \item   \emph{binary
  branching} if ${\sf codom}(\delta)=\{0,\frac{1}{2}\}$;
  \item \emph{simple}
if ${\sf codom}(\delta)=\{0,\frac{1}{2},1\}$;
\item  \emph{semi-simple}
  if ${\sf codom}(\delta) \subseteq \{ \frac{p}{2^q} \mid p,q \in \mathbb{N} \}$. 
\end{itemize}
\end{definition}


In this section we strengthen the following known theorem to binary-branching parity
word automata. It will be used in Section~\ref{sec-words-trees} to establish an undecidability result
for parity tree automata.

\begin{proposition}[\cite{BGB12}]
    \label{prop-as-Rabin}
  The problem whether $\aslang(\wauto)=\varnothing$ is undecidable for
  Rabin word automata. 
\end{proposition}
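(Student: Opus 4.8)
The plan is to reduce from the \emph{value-$1$ problem} for probabilistic automata on finite words, known to be undecidable~\cite{gimbert2010probabilistic}. An instance is a probabilistic automaton on finite words $\mathcal{P}=(Q,\Sigma,M,q_\init,F)$, where each $M(q,\sigma)$ is a probability distribution on $Q$ and $F\subseteq Q$; for $v\in\Sigma^*$ write $\mathcal{P}(v)$ for the probability that $\mathcal{P}$, started in $q_\init$, is in a state of $F$ after reading $v$, and the question is whether $\sup_{v\in\Sigma^*}\mathcal{P}(v)=1$. The reduction rests on the elementary observation that $\sup_{v}\mathcal{P}(v)=1$ exactly when there is an infinite sequence $v_1,v_2,\dots\in\Sigma^*$ with $\sum_{i\ge1}\bigl(1-\mathcal{P}(v_i)\bigr)<\infty$: from such a sequence $1-\mathcal{P}(v_i)\to0$, so the value is $1$; conversely, if the value is $1$, pick $v_i$ with $\mathcal{P}(v_i)>1-2^{-i}$.

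From $\mathcal{P}$ I would build a probabilistic word automaton $\wauto$ over $\Sigma\cup\{\#\}$ (with $\#\notin\Sigma$) carrying a single Rabin pair. Its initial state is a distinguished state $\mathsf{begin}$, and on inputs of the well-formed shape $v_1\#v_2\#v_3\cdots$ (infinitely many $\#$, each $v_i\in\Sigma^*$ nonempty) it processes the blocks $v_i$ one at a time: leaving $\mathsf{begin}$ (or the state $\mathsf{bad}$ below) on the first letter of a block, it restarts a simulation of $\mathcal{P}$ from $q_\init$ using $\mathcal{P}$'s transition probabilities $M$; on reading the $\#$ that closes a block it moves to $\mathsf{begin}$ if the simulated state of $\mathcal{P}$ lies in $F$ (the block \emph{succeeded}) and to $\mathsf{bad}$ otherwise (the block \emph{failed}), then resumes with the next block. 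A leading $\#$ or two consecutive $\#$ send $\wauto$ to a rejecting sink $\mathsf{sink}$; inputs with only finitely many $\#$ are instead ruled out by the acceptance condition, as $\wauto$ then never returns to $\mathsf{begin}$. The Rabin pair is $\bigl\langle\{\mathsf{begin}\},\,\{\mathsf{bad},\mathsf{sink}\}\bigr\rangle$, so a run is accepting exactly when it visits $\mathsf{begin}$ infinitely often --- which forces infinitely many completed blocks, hence infinitely many $\#$ --- while visiting $\mathsf{bad}$ and $\mathsf{sink}$ only finitely often, that is, all but finitely many blocks succeeded.

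Because distinct blocks are simulated with fresh, independent randomness and each block is simulated faithfully, the events ``block $i$ succeeds'' are mutually independent, with probabilities $\mathcal{P}(v_i)$. The Borel--Cantelli lemmas then give, for a well-formed input $v_1\#v_2\#\cdots$, that the measure of the Rabin-accepting runs of $\wauto$ equals $1$ if $\sum_i\bigl(1-\mathcal{P}(v_i)\bigr)<\infty$ and equals $0$ otherwise; ill-formed inputs are accepted with probability $0$. Hence $\aslang(\wauto)\neq\varnothing$ exactly when some sequence $(v_i)_i$ with $\sum_i\bigl(1-\mathcal{P}(v_i)\bigr)<\infty$ exists, which by the observation above holds exactly when $\sup_v\mathcal{P}(v)=1$. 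Undecidability of the latter then yields undecidability of almost-sure emptiness for Rabin word automata.

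The delicate point, I expect, is the choice of acceptance condition. One genuinely needs the \emph{conjunction} of a liveness requirement (``infinitely many blocks are completed'', ruling out inputs with only finitely many $\#$ --- which a bare co-Büchi condition would wrongly accept, since no further $\mathsf{bad}$ is visited after the last $\#$) and a co-liveness requirement (``finitely many blocks fail''), and a single Rabin pair expresses precisely this; this is why the statement concerns Rabin rather than Büchi automata. The other thing to get right is the faithfulness and mutual independence of the per-block simulations, which is what licenses the Borel--Cantelli step. Finally, note that this construction yields neither a parity condition nor the binary-branching property --- both refinements being dealt with separately, the latter in Section~\ref{sec-binary-branching} --- and that one may alternatively just appeal to~\cite{BGB12}.
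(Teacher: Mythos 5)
Your proposal is correct, but it takes a genuinely different route from the paper: the paper does not prove this proposition at all, it imports it from~\cite{BGB12}, and when it later needs a strengthened version it again proceeds through~\cite{BGB12} (the value-$1$ problem reduced to emptiness of probabilistic B\"uchi automata under the probable semantics via Remark~7.3, then the $0/1$-valued Rabin construction of Theorem~5.3, so that probable and almost-sure emptiness coincide), rather than building an automaton from scratch. You instead give a direct, self-contained reduction from the value-$1$ problem: the block construction $v_1\#v_2\#\cdots$ with restarts, the single Rabin pair $\langle\{\mathsf{begin}\},\{\mathsf{bad},\mathsf{sink}\}\rangle$ enforcing ``infinitely many completed blocks, finitely many failed ones'', the characterisation of value $1$ by the existence of a sequence with $\sum_i(1-\mathcal{P}(v_i))<\infty$, and both Borel--Cantelli lemmas to show the acceptance value is $1$ or $0$ accordingly --- all of this is sound, and your discussion of why a bare co-B\"uchi condition fails on words with finitely many $\#$ is exactly the right delicate point. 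The one step you should spell out is the independence claim: in a probabilistic word automaton the only randomness is the run measure, so independence of the events ``block $i$ succeeds'' follows from the Markov property together with the fact that the conditional success probability is $\mathcal{P}(v_i)$ irrespective of whether the block is entered from $\mathsf{begin}$ or $\mathsf{bad}$; this is routine but is precisely what licenses the second Borel--Cantelli direction. As for what each approach buys: the paper's citation-based route comes packaged with the structural facts it needs downstream (the $0/1$ property and the semi-simplicity analysis of its Lemmas~\ref{lem-buchi-rabin} and~\ref{lem-rabin-word}), while your construction is more elementary and in fact yields a $0/1$-valued automaton for free (so it shows undecidability for the probable semantics simultaneously), and if one starts from the value-$1$ problem for simple probabilistic finite automata it even stays within dyadic transition probabilities --- only the parity and binary-branching refinements of Section~\ref{sec-binary-branching} would still be needed.
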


To strengthen this result to binary-branching parity automata, we need a series of
lemmas. The following result strengthens a result known
from~\cite{BGB12} to simple automata. 

\begin{lemma}
  \label{lem-value-one}
  The problem whether $\plang(\wauto)=\varnothing$ is undecidable for
  simple  B\"uchi word automata. 
\end{lemma}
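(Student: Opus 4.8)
The plan is to reduce from Proposition~\ref{prop-as-Rabin}: given a Rabin word automaton $\wauto$ for which the almost-sure emptiness problem is undecidable, I would construct a \emph{simple} B\"uchi automaton $\wauto'$ such that $\aslang(\wauto)=\varnothing$ if and only if $\plang(\wauto')=\varnothing$. Two independent transformations are needed, and I would compose them. First, to pass from the almost-sure semantics to the probable (positive) semantics, I would use the standard trick of complementation: a word $w$ is \emph{not} almost-surely accepted by $\wauto$ iff the set of \emph{rejecting} runs has positive measure; since Rabin conditions are closed under complement (the complement of a Rabin condition is a Streett condition, but on a fixed finite state set one can re-express the relevant ``bad event'' as a Rabin condition by the usual later-appearance-record / index-appearance-record construction, or simply note that a single-pair Rabin complement is co-B\"uchi-like), one obtains a Rabin automaton $\widetilde{\wauto}$ with $\plang(\widetilde{\wauto}) = \Sigma^\omega \setminus \aslang(\wauto)$, so that $\plang(\widetilde{\wauto}) = \Sigma^\omega$ is the undecidable question; a harmless product with a universal gadget turns ``$=\Sigma^\omega$'' into an emptiness question of the right form. (Alternatively, and more cleanly, I would quote the fact from~\cite{BGB12,gimbert2010probabilistic} that positive emptiness is already known undecidable for Rabin automata and start from there; the lemma as stated only needs \emph{some} undecidable positive-emptiness problem to reduce from.)

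The second and main transformation is to reduce the acceptance \emph{type} from Rabin to B\"uchi while simultaneously forcing the automaton to be \emph{simple}, i.e. ${\sf codom}(\delta)\subseteq\{0,\tfrac12,1\}$. For the Rabin-to-B\"uchi step under the positive semantics, I would use the classical nondeterministic guessing: a Rabin automaton with pairs $\{\langle\alpha_i,\beta_i\rangle\}_{i\le k}$ is simulated by a B\"uchi automaton that, at some nondeterministically chosen finite time, guesses the winning pair index $i$, moves to a copy of the state space that is only allowed to visit $Q\setminus\beta_i$, and sets its B\"uchi set to (the copies of) $\alpha_i$. The subtle point in the probabilistic setting is that this guessing must not distort the measure: I would make the guesses by \emph{coin flips inside the automaton} rather than by external nondeterminism, so that from each state the new automaton branches with probabilities in $\{0,\tfrac12\}$ (for the guessing gadget) and otherwise mimics $\wauto$'s transition probabilities. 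Because a run is positively accepted iff \emph{some} sequence of guesses leads to an accepting run and guessing ``correctly'' has positive probability whenever the original run is accepting (and guessing can always be ``delay forever'', contributing measure zero, so it does no harm), one gets $\plang(\wauto')>0$ on $w$ iff $\plang(\widetilde{\wauto})>0$ on $w$; hence $\plang(\wauto')=\varnothing$ iff $\plang(\widetilde{\wauto})=\varnothing$.

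To additionally enforce simplicity, I would observe that any rational transition probability $p_j$ out of a state (with $\sum_j p_j = 1$) can be realised by a finite binary-tree-like gadget of intermediate states each flipping a fair coin, except that to hit probabilities that are not dyadic one also permits probability-$1$ (deterministic) edges inside the gadget; this is exactly what the value $1$ in ${\sf codom}(\delta)=\{0,\tfrac12,1\}$ buys, and it is why the lemma targets \emph{simple} rather than \emph{binary-branching} automata. The gadget adds only finitely many states, preserves the induced measure on the original transitions (a standard ``probabilistic decomposition'' argument), and the acceptance condition is lifted to ignore the new intermediate states (they can be given a neutral color / kept out of the B\"uchi set and visited only finitely often along any run that leaves the gadget, which happens with probability $1$). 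Composing this with the previous construction yields the desired simple B\"uchi automaton $\wauto'$.

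\textbf{Main obstacle.} The delicate step is verifying that the coin-flip guessing gadget for the Rabin-to-B\"uchi conversion does not change whether the \emph{value} is positive: one must check that for every $w$, the event ``the guessed pair is eventually fixed to a correct $i$ and thereafter $\beta_i$ is avoided and $\alpha_i$ is hit infinitely often'' has positive $\mu_w$-measure precisely when the original Rabin automaton has an accepting run of positive measure on $w$ --- in particular that committing to a pair does not prune away all positive-measure continuations. I expect this to follow from a measure-theoretic argument: condition on the (positive-measure) set of original accepting runs, decompose by which pair they satisfy, and note that on each such piece the gadget commits correctly with some fixed positive probability (e.g. $2^{-k}$) at each decision epoch, so by Borel--Cantelli-type reasoning a correct commitment occurs on a positive-measure subset. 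The rest is bookkeeping that the new branching probabilities all lie in $\{0,\tfrac12,1\}$ and that the B\"uchi set is chosen so that intermediate/guessing states do not interfere.
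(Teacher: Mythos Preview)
Your approach differs substantially from the paper's and contains a genuine gap.

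The paper's proof is almost entirely by citation: it observes that (i) emptiness of \emph{simple} probabilistic automata on finite words is already undecidable (Gimbert--Oualhadj), (ii) the known reduction from this to the value-$1$ problem only introduces probability-$1$ transitions, so the value-$1$ problem is undecidable for simple automata too, and (iii) the known reduction (\cite[Remark~7.3]{BGB12}) from the value-$1$ problem on finite words to probable-emptiness of B\"uchi automata again only adds probability-$1$ transitions. The whole point is that the \emph{source} of undecidability is already a simple automaton, and simplicity is preserved along the chain.

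You instead start from Proposition~\ref{prop-as-Rabin}, which concerns Rabin automata with \emph{arbitrary} transition probabilities, and try to massage such an automaton into a simple B\"uchi one. The fatal step is your claim that ``any rational transition probability $p_j$ \ldots\ can be realised by a finite binary-tree-like gadget of intermediate states each flipping a fair coin, except that to hit probabilities that are not dyadic one also permits probability-$1$ edges.'' This is false: any finite acyclic gadget whose edges carry probabilities in $\{0,\tfrac12,1\}$ can only output dyadic probabilities (each root-to-leaf path has weight $2^{-k}$ for some $k$, and sums of these are dyadic), so e.g.\ $\tfrac13$ is unreachable; probability-$1$ edges do not help at all here. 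Allowing loops (rejection sampling) does not rescue the construction in a word automaton, because every transition consumes an input letter, so the number of letters consumed by the gadget would be random. This is precisely why the paper distinguishes \emph{semi-simple} (dyadic) from \emph{simple} automata in Definition~\ref{def-binary-automata}, and why Lemma~\ref{lem-transfo-binary} (semi-simple $\to$ simple) both extends the alphabet with a fresh $\#$ \emph{and} crucially relies on the probabilities being dyadic so that a fixed depth-$d$ tree suffices.

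A secondary issue: your first route through complementation turns ``$\aslang(\wauto)=\varnothing$?'' into ``$\plang(\widetilde{\wauto})=\Sigma^\omega$?'', a universality question; a ``harmless product with a universal gadget'' does not convert universality to emptiness --- that would require another complementation, undoing the first. Your alternative (quote positive-emptiness undecidability directly from \cite{BGB12}) sidesteps this, but then you are left exactly with the task of making an arbitrary-probability B\"uchi automaton simple, which founders on the gap above.
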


\begin{proof}
It is proved in~\cite{gimbert2010probabilistic,gimbert:hal-00422888,oualhadj2012value} that the emptiness
problem for simple probabilistic automata on finite words is
undecidable~\cite[Theorem 6.12]{oualhadj2012value}. This result is  used to prove that
the value 1 problem for probabilistic automata on finite words is
undecidable~\cite[Theorem 6.23]{oualhadj2012value}. Since the
reduction in the proof of this result only introduces 
 transitions with probability 1, it holds also for simple automata (see
also~\cite{chatterjee2010probabilistic} for a reformulation of this
construction).

Now it is described in~\cite[Remark 7.3]{BGB12} how to reduce the
value 1 problem for probabilistic automata on finite words to the
emptiness problem for B\"uchi automata with probable semantics. Once
again, this reduction only introduces transitions with probability one,
hence the result.
\end{proof}

Let ${\cal A}$ be a word automaton with a set of accepting states $\alpha$, and we note ${\cal A}_{{\sf B}}$
and ${\cal A}_{{\sf coB}}$ the B\"uchi and
coB\"uchi interpretations of ${\cal A}$, respectively. Then clearly
$\overline{\plang({\cal A}_{{\sf B}})}=\aslang({\cal A}_{{\sf coB}})$.
It is known that probabilistic B\"uchi word automata with probable
semantics are closed under complement~\cite{BGB12}, therefore there
exists a B\"uchi automaton ${\cal A'}_{{\sf B}}$, such that
$\plang({\cal A'}_{{\sf B}})=\aslang({\cal A}_{{\sf coB}})$. While
this implies the undecidability of the emptiness problem for coB\"uchi
word automata with the almost-sure semantics, the automaton
${\cal A'}_{{\sf B}}$ obtained by the complementation procedure
of~\cite{BGB12} is neither simple nor semi-simple in general. To the best
of our knowledge it is open whether the almost-sure emptiness problem
for simple, or even semi-simple, coB\"uchi word automata is decidable
or not although it is claimed to be undecidable in~\cite{PFS14}
without a proof.
%
Here, we prove that the almost-sure emptiness problem for simple Rabin and parity word automata is indeed undecidable.

\begin{lemma}
  \label{lem-buchi-rabin}
  For every simple B\"uchi word automaton $\wauto$ one can construct a
  semi-simple Rabin word automaton $\wauto'$ such that
  $\plang(\wauto)=\plang(\wauto')$ and for every $w\in\Sigma^\omega$,
  $\wauto'(w)\in \{0,1\}$.
\end{lemma}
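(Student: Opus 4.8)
The plan is to turn a simple Büchi word automaton $\wauto=(Q,\Sigma,\delta,q_\init,\alpha)$ — where a word $w$ is probably accepted iff some run visits $\alpha$ infinitely often with positive probability — into a semi-simple Rabin automaton $\wauto'$ that is \emph{value-deterministic}, i.e.\ $\wauto'(w)\in\{0,1\}$, while preserving the probable language. The key idea is a subset-construction-style power automaton run in parallel with a probabilistic sampling of a single branch. Concretely, I would let the states of $\wauto'$ be of the form $(q,T)$ where $q\in Q$ is a ``tracked'' state and $T\subseteq Q$ is the set of all states reachable on the current prefix. On reading $\sigma$, the $T$-component updates deterministically to $T'=\{p\mid \exists q\in T,\ \delta(q,\sigma,p)>0\}$, and the $q$-component moves to some $q'$ with $\delta(q,\sigma,q')>0$, chosen with a probability that splits the available mass into dyadic pieces so that ${\sf codom}(\delta')\subseteq\{p/2^k\}$ — hence semi-simple. (Since $\wauto$ is simple, from each $(q,\sigma)$ there are at most two successors with probabilities in $\{\tfrac12,1\}$, so this reweighting is harmless and the tracked component is essentially a faithful probabilistic simulation of $\wauto$.)

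The Rabin condition on $\wauto'$ should express: ``the tracked run is \emph{not} Büchi-accepting, \emph{or} some coherent choice along the stabilised reachable set is.'' More carefully, following the standard determinisation-flavoured bookkeeping, I would augment the $T$-component with a record of which states have been ``visited through $\alpha$'' since the last reset (a single IAR/Safra-lite index set suffices because we only need to detect \emph{existence} of an accepting branch, not to determinise fully). The Rabin pairs are then designed so that a run of $\wauto'$ is accepting iff the infinitely-often-reachable part of $w$ under $\wauto$ contains an infinite path through $\alpha$ infinitely often — which, by König's lemma applied to the (finitely-branching) reachable-states tree, is equivalent to $\wauto$ having an accepting run on $w$, i.e.\ to $w\in\plang(\wauto)$. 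Crucially, whether this Rabin condition fires depends \emph{only on $w$} through the deterministic $T$-bookkeeping component, not on the probabilistic $q$-choices; so for a fixed $w$ either every run of $\wauto'$ is accepting or none is, giving $\wauto'(w)\in\{0,1\}$, and that value is $1$ exactly when $w\in\plang(\wauto)$. Thus $\plang(\wauto')=\plang(\wauto)$.

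The main obstacle will be getting the acceptance-condition bookkeeping right so that it is (a) a genuine Rabin condition on a \emph{finite} state space, (b) driven purely by the deterministic component, and (c) correctly equivalent to $\plang(\wauto)$-membership via the König's-lemma argument — in particular handling the ``reset'' of the visited-through-$\alpha$ marker so that a branch that wanders among reachable-but-eventually-dead states does not spuriously satisfy the condition. A secondary technical point is verifying that the probabilistic component can always be realised with dyadic transition probabilities summing to $1$ from each state; but since $\wauto$ is simple this is immediate (at most two successors, weights $\tfrac12,\tfrac12$ or $1$), so ${\sf codom}(\delta')\subseteq\{0,\tfrac12,1\}\subseteq\{p/2^q\}$ and $\wauto'$ is in fact semi-simple as required. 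I would then conclude by explicitly exhibiting the Rabin pairs and checking the two inclusions $w\in\plang(\wauto)\iff\wauto'(w)=1$.
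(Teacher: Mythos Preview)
The paper's proof is a two-line citation: \cite[Theorem~5.3]{BGB12} already builds, from any probabilistic B\"uchi word automaton $\wauto$, a probabilistic Rabin automaton $\wauto'$ with $\wauto'(w)\in\{0,1\}$ for all $w$ and $\plang(\wauto')=\plang(\wauto)$; the only thing the paper adds is the observation that every transition probability in that construction is a finite sum of finite products of the original probabilities, so if $\wauto$ is simple then $\wauto'$ has only dyadic probabilities and is therefore semi-simple.

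Your explicit construction, by contrast, has a real gap. You make the Rabin condition depend only on the deterministic subset/IAR component and then assert that it fires ``iff $\wauto$ has an accepting run on $w$, i.e.\ $w\in\plang(\wauto)$.'' The ``i.e.'' is wrong: for probabilistic B\"uchi automata, the existence of \emph{some} accepting run (nondeterministic acceptance) is in general strictly weaker than the set of accepting runs having \emph{positive measure}. Indeed $\plang(\wauto)$ can fail to be $\omega$-regular --- this is one of the headline results of~\cite{BGB12} --- whereas any acceptance criterion computed by a deterministic finite-state function of $w$ (which is exactly what your $T$-bookkeeping is) defines an $\omega$-regular language. Your automaton therefore recognises the nondeterministic B\"uchi language of the underlying transition graph, a possibly strict superset of $\plang(\wauto)$; the probabilistic tracked state $q$, the only place measure-theoretic information could enter, is never consulted by your acceptance condition and is dead weight. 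The construction of~\cite{BGB12} works precisely because its Rabin acceptance \emph{does} depend on the random choices, and any correct proof of this lemma must reproduce or cite that mechanism.
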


\begin{proof}
  In~\cite[Theorem 5.3]{BGB12}, it has been proved that for every probabilistic B\"uchi word automaton $\wauto$, there exists a 
  probabilistic Rabin word automaton $\wauto'$ for which for every $w\in\Sigma^\omega$, we have $\wauto'(w)\in \{0,1\}$ and also $\plang(\wauto)=\plang(\wauto')$ holds.
  In the proof of the above theorem, the probabilities of
  the transitions in the Rabin word automaton that is constructed are
finite  sums of finite products of transition probabilities in the original B\"uchi
  automaton, hence the result.
\end{proof}

Note that in Lemma~\ref{lem-buchi-rabin}, $\wauto'$ satisfies that $\aslang(\wauto') =
\plang(\wauto')$. Hence from Lemma~\ref{lem-value-one} and
Lemma~\ref{lem-buchi-rabin} we get:

\begin{lemma}
  \label{lem-rabin-word}
  The problems  whether $\plang(\wauto)=\varnothing$ and whether  $\aslang(\wauto)=\varnothing$ are undecidable for
semi-simple Rabin word automata. 
\end{lemma}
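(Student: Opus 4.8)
The plan is to derive Lemma~\ref{lem-rabin-word} directly from the two preceding results, treating it essentially as a corollary. The key observation, already flagged in the note immediately before the statement, is that the Rabin automaton $\wauto'$ produced by Lemma~\ref{lem-buchi-rabin} satisfies $\wauto'(w)\in\{0,1\}$ for every $w\in\Sigma^\omega$. For such an automaton, being probably accepted ($\wauto'(w)>0$) and being almost-surely accepted ($\wauto'(w)=1$) are the same condition, so $\plang(\wauto')=\aslang(\wauto')$ as languages. Hence the two emptiness problems coincide for the automata in the image of the construction of Lemma~\ref{lem-buchi-rabin}, and it suffices to establish undecidability of just one of them.

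First I would recall Lemma~\ref{lem-value-one}, which gives undecidability of the emptiness problem $\plang(\wauto)=\varnothing$ for simple B\"uchi word automata. Then, given such a simple B\"uchi automaton $\wauto$, apply Lemma~\ref{lem-buchi-rabin} to obtain a semi-simple Rabin word automaton $\wauto'$ with $\plang(\wauto)=\plang(\wauto')$ and $\wauto'(w)\in\{0,1\}$ for all $w$. Since $\wauto$ was an arbitrary simple B\"uchi automaton and the construction is effective, the map $\wauto\mapsto\wauto'$ is a computable reduction from the (undecidable) emptiness problem for simple B\"uchi automata with probable semantics to the emptiness problem for semi-simple Rabin automata with probable semantics; this yields undecidability of whether $\plang(\wauto')=\varnothing$ for semi-simple Rabin automata.

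Finally, using $\wauto'(w)\in\{0,1\}$, I would note that $\aslang(\wauto')=\{w\mid \wauto'(w)=1\}=\{w\mid \wauto'(w)>0\}=\plang(\wauto')$, so $\aslang(\wauto')=\varnothing$ iff $\plang(\wauto')=\varnothing$. Thus the same reduction witnesses undecidability of the almost-sure emptiness problem for semi-simple Rabin word automata as well. Honestly, there is no real obstacle here: all the technical content lives in Lemma~\ref{lem-value-one} and Lemma~\ref{lem-buchi-rabin}, and the only point requiring the slightest care is making explicit that the $\{0,1\}$-valuedness of $\wauto'$ collapses the probable and almost-sure semantics, so that a single reduction handles both emptiness problems. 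If anything, the minor subtlety is simply checking that the $\{0,1\}$-value property is preserved as a property of $\wauto'$ itself (which it is, by the statement of Lemma~\ref{lem-buchi-rabin}) rather than something we would need to re-prove.
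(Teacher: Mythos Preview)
Your proposal is correct and matches the paper's approach exactly: the paper also derives Lemma~\ref{lem-rabin-word} as an immediate corollary of Lemma~\ref{lem-value-one} and Lemma~\ref{lem-buchi-rabin}, using precisely the observation that $\wauto'(w)\in\{0,1\}$ forces $\aslang(\wauto')=\plang(\wauto')$. There is nothing to add.
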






Now, we show how to obtain a simple automaton from a semi-simple automaton while
preserving language emptiness.

\begin{lemma}
  \label{lem-transfo-binary}
  For every semi-simple Rabin word automaton $\wauto$ one can construct a
  simple Rabin word automaton  $\wauto'$ such that  
  $\aslang(\wauto)=\emptyset$ iff $\aslang(\wauto')=\emptyset$, and
  $\plang(\wauto)=\emptyset$ iff $\plang(\wauto')=\emptyset$.
\end{lemma}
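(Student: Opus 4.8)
The plan is to realise each transition probability of $\wauto$ --- which, $\wauto$ being semi-simple, is a dyadic rational --- by a fixed number of fair coin tosses performed in fresh auxiliary states, and to create room for these tosses by padding the input alphabet with one new letter $\star$. Since $\wauto=(Q,\Sigma,\delta,q_\init,\acc)$ has finitely many transitions, fix $N$ such that every value $\delta(q,\sigma,p)$ is an integer multiple of $2^{-N}$; if $N\le 1$ then $\wauto$ is already simple, so assume $N\ge 2$ and write $\delta(q,\sigma,p)=k_{q,\sigma,p}\cdot 2^{-N}$, so that $\sum_{p\in Q}k_{q,\sigma,p}=2^N$ for all $(q,\sigma)$. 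For each $(q,\sigma)$ fix a function $\ell_{q,\sigma}\colon\{0,1\}^N\to Q$ with $|\ell_{q,\sigma}^{-1}(p)|=k_{q,\sigma,p}$ (it exists because the fibre sizes sum to $2^N$), so that applying $\ell_{q,\sigma}$ to $N$ uniformly random bits reproduces exactly the distribution $\delta(q,\sigma,\cdot)$. Let $\Sigma'=\Sigma\uplus\{\star\}$ and let $h\colon\Sigma^\omega\to\Sigma'^\omega$ be the injection $h(\sigma_0\sigma_1\cdots)=\sigma_0\star^{N-1}\sigma_1\star^{N-1}\cdots$.

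Next I would build $\wauto'$. Its states are the states of $Q$ (the \emph{real} states, entered only after a completed gadget), auxiliary states $(q,\sigma,u)$ with $q\in Q$, $\sigma\in\Sigma$ and $u\in\bigcup_{1\le j\le N-1}\{0,1\}^j$, and a sink $\bot$; the initial state is $q_\init$. Reading a letter $\sigma\in\Sigma$ in a real state $q$ performs the first toss of the gadget for $(q,\sigma)$: with probability $\tfrac12$ each it goes to $(q,\sigma,0)$ and $(q,\sigma,1)$. Reading $\star$ in $(q,\sigma,u)$ with $|u|<N-1$ goes with probability $\tfrac12$ each to $(q,\sigma,u0)$ and $(q,\sigma,u1)$; reading $\star$ in $(q,\sigma,u)$ with $|u|=N-1$ goes with probability $\tfrac12$ each to the real states $\ell_{q,\sigma}(u0)$ and $\ell_{q,\sigma}(u1)$. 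Any \emph{structurally wrong} letter --- a $\star$ read in a real state, or a letter of $\Sigma$ read in an auxiliary state --- leads with probability $1$ to $\bot$, and $\bot$ loops with probability $1$ on every letter. Then ${\sf codom}(\delta')=\{0,\tfrac12,1\}$ and outgoing probabilities sum to $1$ in every state, so $\wauto'$ is a simple automaton. Finally $\acc'$ is the Rabin condition obtained by keeping the very same pairs $\langle\alpha_i,\beta_i\rangle\subseteq Q\subseteq Q'$, so that all auxiliary states and $\bot$ lie outside every $\alpha_i$ and $\beta_i$; in particular $\bot$ is in no $\alpha_i$, so runs trapped in $\bot$ are rejecting.

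For correctness, fix $w\in\Sigma^\omega$ and let $\pi$ send a run $\run'$ of $\wauto'$ on $h(w)$ to the sequence of its real states; this is a run of $\wauto$ on $w$ (no run of $\wauto'$ on $h(w)$ ever visits $\bot$). The map $\pi$ is measure preserving, meaning $\mu_{h(w)}(\pi^{-1}(S))=\mu_w(S)$ for every measurable set $S$ of runs of $\wauto$ on $w$: it suffices to check this on cylinders, where it reduces to the identity that the $\wauto'$-measure of the runs whose first $i+1$ real states are $p_0\cdots p_i$ equals $\prod_{j<i}\delta(p_j,w_j,p_{j+1})$ --- which holds because within each gadget the $N$ tosses are independent fair bits and $\ell_{q,\sigma}$ has fibre sizes $k_{q,\sigma,\cdot}$, so that preimage is a finite disjoint union of $\wauto'$-cylinders of measure $2^{-Ni}$, one for each admissible choice of bitstrings. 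Moreover $\run'\in\acc'$ iff $\pi(\run')\in\acc$, since $\run'$ avoids $\bot$, its real states occurring infinitely often are exactly those of $\pi(\run')$, and no auxiliary state occurs in a Rabin pair. Hence $\wauto'(h(w))=\mu_{h(w)}\big(\pi^{-1}(\{\run\mid \run\in\acc\})\big)=\wauto(w)$. On the other hand, any $w'\in\Sigma'^\omega$ that is not of the form $h(w)$ has a structurally wrong letter at some finite position, so every run of $\wauto'$ on $w'$ reaches $\bot$ there and is rejecting, whence $\wauto'(w')=0$. Combining the two facts: $\aslang(\wauto')\neq\varnothing$ iff $\wauto'(h(w))=1$ for some $w$ iff $\wauto(w)=1$ for some $w$ iff $\aslang(\wauto)\neq\varnothing$; and the identical argument with ``$>0$'' in place of ``$=1$'' gives $\plang(\wauto')\neq\varnothing$ iff $\plang(\wauto)\neq\varnothing$.

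The only step carrying real content is the measure-preservation claim for $\pi$; once it is spelled out through the cone-based definition of the run measures of $\wauto$ and of $\wauto'$, it amounts to the elementary counting identity about the fibres of $\ell_{q,\sigma}$ stated above, and I expect everything else --- the structural analysis of ill-formed inputs, and the fact that the auxiliary states and $\bot$ do not affect the Rabin condition --- to be routine bookkeeping.
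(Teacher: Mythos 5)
Your proof is correct and follows essentially the same route as the paper: expand each dyadic probabilistic choice into a depth-$N$ binary tree of fair coin tosses on a padded alphabet with a fresh letter, inherit the Rabin pairs via projection onto the original states, and observe that only padded words get non-zero value, equal to the original automaton's value on the unpadded word. Your explicit rejecting sink for ill-formed inputs and the cylinder-by-cylinder measure-preservation argument are slightly more detailed than the paper's presentation, but the construction and correctness argument are the same.
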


\begin{proof}
Let  $\wauto=(Q,\Sigma,\delta,q_\init,\acc)$ be a semi-simple word
automaton, \ie for all $q,q'\in Q$ and $a\in\Sigma$, $\delta(q,a,q')=c/2^d$
for some $c,d\in\setN$.
Since there are finitely many states we can assume that $d$ is the same for all $q,a,q'$ by taking $d$ as the largest of all $d'$ occurring on the transitions and multiplying the constants $c$ accordingly.
For every $q\in Q$ and $a\in\Sigma$, we simulate the possible transitions from
$q$ when reading $a$ with a full binary tree of transitions of depth
$d$, where the root is $q$ and the leaves are the destination states (see Figure~\ref{fig-transfo-simple-binary}, here $d=3$).
To do so we introduce a set of $2^{d}-2$ fresh states $(q,a)_b$ for the
internal nodes of the binary tree of transitions. They are indexed
by all finite words $b\in\{0,1\}^+$ of length at most $d-1$, and the
transitions are as follows: first, they all have probability one half,
except for the last level.
Second, in state $q$ when reading $a$, the two possible transitions
are $(q,a)_0$ and $(q,a)_1$. Then, in all states of the form
$(q,a)_b$, the only transitions with non-zero probability
are by reading the fresh symbol $\#$;
if $b\in\{0,1\}^+$ is of length at most
$d-2$,  it has transitions to $(q,a)_{b\cdot 0}$ and
$(q,a)_{b\cdot 1}$. Finally, for states of the form
$(q,a)_b$ where $b\in\{0,1\}^+$ is of length $d-1$: there are
$2^{d-1}$ such states, and for each one we 
can define two transitions with probability $\frac{1}{2}$, for a total
of $2^d$ possible transitions.
For each $q'\in Q$, if $\delta(q,a,q')=c/2^d$ then we assign $c$
of these possible  transitions to $q'$;  this is possible because
$\sum_{q'\in Q}\delta(q,a,q')=1$.
If a state $(q,a)_b$,
where $b$ is of length $d-1$,
is assigned two outgoing transitions to the same $q'$, we define a transition with probability
$1$
instead. 

Thus $\wauto'=(Q',\Sigma\cup\{\#\},\delta',q_\init,\acc')$ is
defined as follows: $Q'=Q\cup \bigcup_{q,a}Q_{q,a}$, where $Q_{q,a}$
is the set of fresh states of the form $(q,a)_b$. 
The probabilistic transition function
$\delta'$ is defined
as described above. The initial state $q_\init$ is unchanged, and the
acceptance condition $\acc'$ is inherited from $\acc$: a run $\run$ of
$\wauto'$ is in $\acc'$ if its projection $\proj_Q\run$ on $Q$ is in
$\acc$  ($\proj_Q\run$ is obtained by removing from $\run$ states not
in $Q$).
Now one can see that only words of the form $(\Sigma\cdot
\{\#\}^{d-1})^\omega$ have non-zero value in $\wauto'$, and for such a
word $w\in(\Sigma\cdot\{\#\}^{d-1})^\omega$, we have that
$\wauto'(w)=\wauto(\proj_\Sigma(w))$.
As a result there is a bijection between $\aslang(\wauto)$ and
$\aslang(\wauto')$, and also $\plang(\wauto)$ is in bijection with $\plang(\wauto')$.
\end{proof}

  \begin{figure}
      \centering
      \begin{subfigure}{.3\textwidth}
        \centering
         \begin{tikzpicture}
           [
           sibling distance        = 4em,
           level distance          = 8em,
           edge from parent/.style = {draw, -latex},
           every node/.style       = {align=center,font=\footnotesize},
           treenode/.style = {} 
           ]
           \node [treenode] {$q$}
           child { node [treenode] {$q_1$}
             edge from parent node [left,pos=.6] {$a,\frac{1}{8}$}}
           child { node [treenode] {$q_2$}
             edge from parent node [below right] {$a,\frac{4}{8}$}}
           child { node [treenode] {$q_3$}
             edge from parent node [right,pos=.6] {$a,\frac{3}{8}$}};          
         \end{tikzpicture}
       \end{subfigure}
       \begin{subfigure}{.6\textwidth}
         \centering
         \begin{tikzpicture}
           [
           edge from parent/.style = {draw, -latex},
           every node/.style       = {align=center,font=\footnotesize},
           treenode/.style = {}, 
           level distance = 2.2cm,
           level 1/.style = {sibling distance = 14em},
           level 2/.style = {sibling distance = 7em},
           level 3/.style = {sibling distance = 3em}
           ]
           \node [treenode] {$q$}
           child { node [treenode] {$(q,a)_0$}
             child { node [treenode] {$(q,a)_{00}$}
               child { node [treenode] {$q_1$}              
                 edge from parent node [left,pos=.55] {$\#,\frac{1}{2}$}}
               child { node [treenode] {$q_2$}              
                 edge from parent node [right,pos=.55] {$\#,\frac{1}{2}$}}
               edge from parent node [left,pos=.55] {$\#,\frac{1}{2}$}}
             child { node [treenode] {$(q,a)_{01}$}
               child { node [treenode] {$q_2$}              
                 edge from parent node [right] {$\#,1$}}
               edge from parent node [right] {$\#,\frac{1}{2}$}}
             edge from parent node [left] {$a,\frac{1}{2}$}}
           child { node [treenode] {$(q,a)_1$}
             child { node [treenode] {$(q,a)_{10}$}
               child { node [treenode] {$q_2$}              
                 edge from parent node [left,pos=.55] {$\#,\frac{1}{2}$}}
               child { node [treenode] {$q_3$}              
                 edge from parent node [right,pos=.55] {$\#,\frac{1}{2}$}}
               edge from parent node [left,pos=.55] {$\#,\frac{1}{2}$}}
             child { node [treenode] {$(q,a)_{11}$}
               child { node [treenode] {$q_3$}              
                 edge from parent node [right] {$\#,1$}}
               edge from parent node [right] {$\#,\frac{1}{2}$}}
             edge from parent node [right] {$a,\frac{1}{2}$}};          
         \end{tikzpicture}
       \end{subfigure}
         \caption{Transformation from semi-simple to
           simple automata}
    \label{fig-transfo-simple-binary}        
  \end{figure}

Note that for binary-branching automata, for all states
$q\in Q$ and letter $a\in\Sigma$, there are exactly two states
$q_1\neq q_2$ such that $\delta(q,a,q_i)=\frac{1}{2}$, and we may write
$\delta(q,a)=\{q_1,q_2\}$. 
Observe that by duplicating states that
are reached with probability one, every simple probabilistic automaton 
can be easily transformed into an equivalent one with binary
branching. 
We show it for Rabin acceptance condition, but it holds for all $\omega$-regular acceptance conditions.
\begin{lemma}\label{lem-simple-binary}
    For every simple Rabin 
    word automaton $\mathcal{A}$, one can construct 
    a binary-branching Rabin 
    word automaton $\mathcal{B}$ such that 
    $\aslang(\mathcal{A}) = \aslang(\mathcal{B})$ and $\plang(\mathcal{A})=\plang(\mathcal{B})$. 
\end{lemma}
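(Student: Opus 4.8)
The plan is the one sketched just above the statement: duplicate every state so that each probability-$1$ transition can be turned into a fair coin toss between two behaviourally identical copies of its target, while the genuine $\tfrac12$-$\tfrac12$ branchings of $\mathcal{A}$ are kept (and merely redirected to a fixed copy). Writing $\mathcal{A}=(Q,\Sigma,\delta,q_\init,\acc)$ with $\acc=\{\zug{\alpha_1,\beta_1},\dots,\zug{\alpha_k,\beta_k}\}$, I would take $\mathcal{B}=(Q\times\{0,1\},\Sigma,\delta',(q_\init,0),\acc')$ where, for all $(q,i)\in Q\times\{0,1\}$ and $a\in\Sigma$: if $\delta(q,a,q')=1$ for the (necessarily unique) $q'$, then $\delta'((q,i),a,(q',0))=\delta'((q,i),a,(q',1))=\tfrac12$; if instead $\delta(q,a,q_1)=\delta(q,a,q_2)=\tfrac12$ with $q_1\neq q_2$, then $\delta'((q,i),a,(q_1,0))=\delta'((q,i),a,(q_2,0))=\tfrac12$; and $\delta'$ is $0$ in all other cases. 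Since $\delta$ is simple, for every pair $(q,a)$ exactly one of these two cases occurs, so each state $(q,i)$ has exactly two outgoing $a$-transitions, both of probability $\tfrac12$ and to distinct states; hence $\mathcal{B}$ is binary branching. I would let $\acc'$ be the Rabin condition with pairs $\zug{\alpha_i\times\{0,1\},\beta_i\times\{0,1\}}$ (for an arbitrary $\omega$-regular condition one lifts it the same way along the projection on $Q$, which yields the asserted generality).

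Let $\pi$ be the map erasing second components, acting on finite or infinite sequences of states. First I would record the routine facts: $\pi$ restricts to a surjection $\Runs[\mathcal{B}]{w}\to\Runs[\mathcal{A}]{w}$ for every $w\in\Sigma^\omega$; and since runs range over a finite state set, the set of states occurring infinitely often in a run $r'$ of $\mathcal{B}$ is mapped by $\pi$ onto the set of states occurring infinitely often in $\pi(r')$. Together with the choice of $\acc'$ this gives $r'\in\acc'$ iff $\pi(r')\in\acc$, that is, $\AccRuns[\mathcal{B}]{w}=\pi^{-1}(\AccRuns[\mathcal{A}]{w})$.

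The crux is that $\pi$ is measure preserving, $\mu^{\mathcal{B}}_w\circ\pi^{-1}=\mu^{\mathcal{A}}_w$; by Carath\'eodory's uniqueness theorem it suffices to check this on cones of finite run prefixes. Given a finite run prefix $u=r_0\cdots r_n$ of $\mathcal{A}$ on $w$, the set $\pi^{-1}(\cone(u))$ is the disjoint union of the cones $\cone(u')$ over the finite run prefixes $u'$ of $\mathcal{B}$ with $\pi(u')=u$. By construction, along $u$ each deterministic step (a $1$-transition of $\mathcal{A}$) allows two values of the next second component while each branching step forces it to $0$, and the initial state forces $r_0'=(q_\init,0)$; so if $u$ has $b$ branching steps there are exactly $2^{\,n-b}$ such $u'$, each of $\mu^{\mathcal{B}}_w$-measure $(\tfrac12)^n$ (every transition of $\mathcal{B}$ has probability $\tfrac12$), summing to $(\tfrac12)^{b}=\prod_{i<n}\delta(r_i,w_i,r_{i+1})=\mu^{\mathcal{A}}_w(\cone(u))$. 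Hence $\mathcal{B}(w)=\mu^{\mathcal{B}}_w(\AccRuns[\mathcal{B}]{w})=\mu^{\mathcal{B}}_w(\pi^{-1}(\AccRuns[\mathcal{A}]{w}))=\mu^{\mathcal{A}}_w(\AccRuns[\mathcal{A}]{w})=\mathcal{A}(w)$ for every $w$, and since $\aslang$ and $\plang$ of an automaton depend only on its value function ($\aslang=\{w:\text{value}=1\}$, $\plang=\{w:\text{value}>0\}$), this yields $\aslang(\mathcal{A})=\aslang(\mathcal{B})$ and $\plang(\mathcal{A})=\plang(\mathcal{B})$. The only mildly delicate point is this measure-preservation computation, i.e.\ checking that the artificial coin tosses inserted at the former $1$-transitions "sum out" correctly; everything else is bookkeeping.
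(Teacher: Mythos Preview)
Your proof is correct and follows essentially the same idea as the paper's: replace each probability-$1$ transition by a fair coin toss between two behaviourally identical copies of the target, lift the Rabin pairs accordingly, and observe that the value on every word is preserved. The only cosmetic difference is that the paper duplicates just the states that are targets of some probability-$1$ transition (adding a primed copy $q'$ for each such $q$), whereas you duplicate every state uniformly via $Q\times\{0,1\}$; your version is slightly more wasteful in state count but cleaner to describe. Your measure-preservation argument via Carath\'eodory and the explicit cone computation is also more detailed than the paper's, which simply asserts that the measure of accepting runs coincides.
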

\begin{proof}
Consider a simple word automaton
$\mathcal{A}=(Q,\Sigma,\delta,q_\init,\acc)$ with Rabin (resp. parity)
acceptance condition. 
We construct a binary-branching automaton with Rabin (resp. parity) acceptance condition from $\mathcal{A}$.
First we define $\delta_1 \subseteq \delta$, the set of
transitions that have probability 1: $\delta_1=\{(p,a,q)\in
Q\times\Sigma\times Q\mid
\delta(p,a,q)=1\}$.
We define similarly $\delta_{\frac{1}{2}}$ to be the set of transitions with
probability $\frac{1}{2}$. 
Note that since $\auto$ is simple, for all $(p,a,q)$ that is not in $\delta_1\cup\delta_{\frac{1}{2}}$,
we have that $\delta(p,a,q)=0$.
We also let $Q_1$
be the set of destination states of some transition in $\delta_1$, that is,
$Q_1=\{q \mid \exists p\in Q, \exists a\in \Sigma, (p,a,q)\in\delta_1\}$.
For each state $q \in Q_1$, in the binary-branching automaton, we  create a fresh state $q'$ (the primed version of
$q$) and every transition $(p,a,q)\in\delta_1$ is split into two transitions
$(p,a,q)$ and $(p,a,q')$, each with probability $\frac{1}{2}$.  


Formally, let $Q'_1=\{q'\mid q\in Q_1\}$ be a set of fresh
states.
We construct the binary-branching Rabin (resp. parity) word automaton $\mathcal{B} = (Q',\Sigma,\delta',q_\init,\acc')$, where
$Q'= Q \cup Q'_1$, and
$\delta'$ is defined as follows:
\begin{itemize}
\item for every $(p,a,q)\in\delta_1$ such that $p\notin Q_1$,
  \[\delta'(p,a,q)=\delta'(p,a,q')=\frac{1}{2}\]
\item for every $(p,a,q)\in\delta_1$ such that $p\in Q_1$,
  \[\delta'(p,a,q)=\delta'(p,a,q')=\delta'(p',a,q)=\delta'(p',a,q')=\frac{1}{2}\]
\item for every $(p,a,q)\in\delta_{\frac{1}{2}}$ such that $p\notin
  Q_1$, \[\delta'(p,a,q)=\frac{1}{2}\]
  \item for every $(p,a,q)\in\delta_{\frac{1}{2}}$ such that $p\in Q_1$, \[\delta'(p,a,q)=\delta'(p',a,q)=\frac{1}{2}\]
\end{itemize}
and all other transitions are assigned probability 0 by $\delta'$.


Now we define $\acc'$ for each of $\mathcal{A}$ being a simple Rabin automaton or $\mathcal{A}$ being a simple parity automaton.
First, let $\mathcal{A}$ be a Rabin automaton.
Let $\acc$ be defined in terms of $\{\zug{\alpha_1, \beta_1}, \dots, \zug{\alpha_k, \beta_k}\}$.
We define $\acc'$ in terms of the pairs $\{\zug{\alpha'_1, \beta'_1}, \dots, \zug{\alpha'_k, \beta'_k}\}$, where
$\alpha_i'= \alpha_i \cup \{q'\:|\: q \in \alpha_i$  and $q' \in Q'\setminus Q\}$ and $\beta_i'= \beta_i \cup \{q'\:|\: q \in \beta_i$  and $q' \in Q'\setminus Q\}$ for all $1 \le i \le k$.


From the construction of $\mathcal{B}$, we see that for every word $w \in \Sigma^\omega$, the measure of the set of accepting runs on input $w$ is the same in both $\mathcal{A}$ and $\mathcal{B}$, hence the result.
\end{proof}


Now from Lemma \ref{lem-rabin-word}, Lemma \ref{lem-transfo-binary} and Lemma \ref{lem-simple-binary}, we obtain the following.
 
\begin{corollary}
    \label{thm-as-Rabin}
  The problems whether $\aslang(\wauto)=\varnothing$ and whether $\plang(\wauto)=\varnothing$ are undecidable for
  \emph{binary-branching} Rabin word automata. 
\end{corollary}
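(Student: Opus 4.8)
The plan is to chain together the three results cited, which is essentially bookkeeping. First I would invoke Lemma~\ref{lem-rabin-word}, which already gives us that both $\plang(\wauto) = \varnothing$ and $\aslang(\wauto) = \varnothing$ are undecidable for \emph{semi-simple} Rabin word automata. The goal is then to push this down to binary-branching automata without losing either reduction.

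Next I would apply Lemma~\ref{lem-transfo-binary}: given a semi-simple Rabin word automaton $\wauto$, it produces a \emph{simple} Rabin word automaton $\wauto'$ with $\aslang(\wauto)=\varnothing$ iff $\aslang(\wauto')=\varnothing$, and likewise for $\plang$. So after this step we have undecidability of both emptiness problems for simple Rabin word automata. Then I would apply Lemma~\ref{lem-simple-binary}: given a simple Rabin word automaton $\mathcal{A}$, it constructs a binary-branching Rabin word automaton $\mathcal{B}$ with $\aslang(\mathcal{A}) = \aslang(\mathcal{B})$ and $\plang(\mathcal{A}) = \plang(\mathcal{B})$, so in particular emptiness is preserved exactly (not just preservation of the empty/non-empty status). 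Composing these two reductions with Lemma~\ref{lem-rabin-word} yields: both $\aslang(\wauto)=\varnothing$ and $\plang(\wauto)=\varnothing$ are undecidable for binary-branching Rabin word automata, which is exactly the statement of the corollary.

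The only subtlety worth checking — and the closest thing to an obstacle here — is that each lemma preserves \emph{both} the almost-sure and the probable emptiness problems simultaneously, so that the composition is coherent for whichever semantics we care about; this is why the intermediate lemmas were deliberately phrased to carry both statements. One also has to make sure the reductions are effective (computable), which is immediate from the explicit constructions in the proofs of Lemmas~\ref{lem-transfo-binary} and~\ref{lem-simple-binary}. Since all the work has been front-loaded into those lemmas, the proof of the corollary itself is a one-line composition argument, and I would present it as such.

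\begin{proof}
By Lemma~\ref{lem-rabin-word}, the problems whether $\aslang(\wauto)=\varnothing$ and whether $\plang(\wauto)=\varnothing$ are undecidable for semi-simple Rabin word automata. By Lemma~\ref{lem-transfo-binary}, there is an effective reduction from semi-simple Rabin word automata to simple Rabin word automata that preserves emptiness of both $\aslang$ and $\plang$; hence these two problems are undecidable for simple Rabin word automata. Finally, by Lemma~\ref{lem-simple-binary}, there is an effective reduction from simple Rabin word automata to binary-branching Rabin word automata that preserves $\aslang$ and $\plang$ exactly, and in particular preserves their emptiness. Composing these reductions yields the undecidability of both problems for binary-branching Rabin word automata.
\end{proof}
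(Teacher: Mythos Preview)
Your proposal is correct and follows exactly the same route as the paper: the paper simply states that the corollary follows from Lemma~\ref{lem-rabin-word}, Lemma~\ref{lem-transfo-binary} and Lemma~\ref{lem-simple-binary}, without spelling out the chaining. Your write-up is just a slightly more explicit version of that same one-line composition.
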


Finally,  in the classical (non-probabilistic)
setting, Rabin and parity word automata are known to have the same expressive
power. We show that it also holds under the probabilistic almost-sure
and positive
semantics, while preserving binary branching, and therefore we get the following result:

\begin{theorem}
    \label{thm-as-parity}
    The problems whether $\aslang(\wauto)=\varnothing$ and whether $\plang(\wauto)=\varnothing$ are undecidable for
  \emph{binary-branching} parity word automata. 
\end{theorem}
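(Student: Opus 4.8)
The plan is to reduce the problem for binary-branching parity word automata to the already-established undecidability for binary-branching Rabin word automata (Corollary~\ref{thm-as-Rabin}), by showing that any binary-branching Rabin automaton can be converted into an equivalent binary-branching parity automaton, preserving both the positive language $\plang$ and the almost-sure language $\aslang$. In the classical (non-probabilistic) setting this equivalence is folklore via the Index Appearance Record (IAR) / Latest Appearance Record construction, which tracks, along a run, the ordering of Rabin pairs according to how recently their $\alpha$- and $\beta$-components have been visited, and thereby reads off a parity index. The key observation is that IAR is a \emph{deterministic} memory structure: it is a function of the sequence of states seen so far, not of any additional nondeterministic choice. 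Hence taking the synchronous product of the input automaton $\wauto$ with the IAR automaton does not create or destroy any branching: each original transition $(q,a)\mapsto\{q_1,q_2\}$ with probabilities $\tfrac12,\tfrac12$ lifts to a transition from $(q,m)$ reading $a$ to the two states $(q_1, \mathrm{upd}(m,q_1))$ and $(q_2,\mathrm{upd}(m,q_2))$, again with probabilities $\tfrac12,\tfrac12$, so binary branching is preserved exactly.

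First I would recall the IAR construction for Rabin-to-parity translation and state precisely the memory set $M$, the update function $\mathrm{upd}: M\times Q\to M$, and the induced priority function on $M$ (equivalently on $Q\times M$). Second, I would define the product automaton $\wauto' = (Q\times M, \Sigma, \delta', q'_\init, \acc')$ with $\delta'((q,m),a,(q_1,\mathrm{upd}(m,q_1)),*)$ inheriting $\delta(q,a,q_1)$ and analogously for $q_2$, set $\acc'$ to be the parity condition given by the priority function, and verify that $\wauto'$ is binary-branching whenever $\wauto$ is. Third — and this is the crux of the argument — I would show that for every word $w\in\Sigma^\omega$ the value is preserved: $\wauto'(w)=\wauto(w)$, and moreover $\AccRuns[\wauto']{w}$ is in measure-preserving bijection with $\AccRuns[\wauto]{w}$. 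This follows because the map $\run\mapsto \proj_Q\run$ is a bijection between the runs of $\wauto'$ on $w$ and the runs of $\wauto$ on $w$ (the $M$-component is uniquely determined by the $Q$-component via $\mathrm{upd}$), it preserves the cone structure and hence the probability measure $\muw$, and by correctness of IAR a run $\run$ of $\wauto'$ satisfies the parity condition $\acc'$ if and only if $\proj_Q\run$ satisfies the Rabin condition $\acc$. Consequently $\muw(\AccRuns[\wauto']{w}) = \muw(\AccRuns[\wauto]{w})$, so $\plang(\wauto)=\plang(\wauto')$ and $\aslang(\wauto)=\aslang(\wauto')$; in particular emptiness of either language is preserved.

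The main obstacle is carrying out the correctness proof of the IAR translation cleanly in this probabilistic/qualitative setting. The subtlety is that the equivalence "run $\run$ satisfies parity $\iff$ $\proj_Q\run$ satisfies Rabin" must be established pointwise at the level of individual infinite runs — which is exactly what the classical IAR lemma gives — and then the measure-theoretic conclusion is a routine transfer through the bijection of run spaces, since that bijection is induced by a bijection on the underlying cones and therefore preserves $\muw$. One has to be a little careful that the IAR product does not collapse distinct transitions or merge the two successor states when $q_1$ and $q_2$ happen to update the memory to the same value; but even if $(q_1,\mathrm{upd}(m,q_1)) = (q_2,\mathrm{upd}(m,q_2))$, this simply means $q_1 = q_2$, which contradicts binary branching of $\wauto$, so the two successors of $(q,m)$ are always distinct and $\wauto'$ is genuinely binary-branching. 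With the bijection and the pointwise equivalence in hand, Theorem~\ref{thm-as-parity} follows immediately from Corollary~\ref{thm-as-Rabin}.
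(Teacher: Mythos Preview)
Your proposal is correct and follows essentially the same approach as the paper: the paper also takes the synchronous product of $\wauto$ with a deterministic parity automaton recognising the Rabin condition $\acc\subseteq Q^\omega$ (your IAR is one concrete instance of such an automaton), observes that determinism of the memory component preserves binary branching, and concludes that the values $\wauto(w)$ and $\wauto'(w)$ coincide for every $w$. The only cosmetic difference is that the paper phrases the equivalence via the ``tree of runs'' viewed as an infinite Markov chain rather than via your explicit measure-preserving bijection on run spaces, but the underlying argument is the same.
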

\begin{proof}
    We show that any binary-branching Rabin word automaton $\wauto$ can be
    converted into an equivalent binary-branching parity word automaton
    $\wauto'$. 

    Let $\wauto = (Q,\Sigma,\delta,q_\init,\acc)$ where $\acc\subseteq Q^\omega$ is a Rabin
    condition (explicitly given as a set of Rabin pairs). We know that
    any (non-probabilistic) Rabin word automaton is effectively
    equivalent to some deterministic parity automaton. Therefore,
    there exists a deterministic parity automaton $P$ over the
    alphabet $Q$ such that its
    language $\lang(P) = \acc$. Let $P = (Q_P,Q, \delta_P, i_P, \alpha)$ where
    $\alpha$ is a 
    parity 
    function. We construct the probabilistic
    parity word automaton $\wauto' = (Q\times Q_P,
     \Sigma,\delta', (q_\init, p), \alpha')$ where 
    \begin{itemize}
      \item $p = \delta_P(i_P, q_\iota)$
      \item $\delta'((q,p), a, (q', p')) = \delta(q,a,q')$ if $p' =
        \delta_P(p,q')$, and $0$ otherwise. 
      \item $\alpha'(q,p) = \alpha(p)$ for all $q\in Q$ and $p\in
        Q_P$. 
    \end{itemize}
    Note that this construction preserves binary branching, and in
    particular we have $\delta'((q,p),a) = \{ (q_1,\delta_P(p,q_1)),
    (q_2,\delta_P(p,q_2))\}$ if $\delta(q,a)=\{q_1,q_2\}$.

    To show that $\aslang(\wauto) = \aslang(\wauto')$ holds, consider
    a word $w\in\Sigma^*$ and an arbitrary linear order $<$ on $Q$. 
    Consider the tree $t_w :
    \{0,1\}^*\rightarrow Q$ defined by $t_w(\epsilon) = q_\iota$ and
    for $u\in\{0,1\}^*$, if
    $\delta(t_w(u),w_{|u|})=\{q_0,q_1\}$ with $q_0<q_1$, then 
    let $t_w(u\cdot i) = q_i$ for $i=0,1$.     We call $t_w$ the
    \emph{tree of runs} on $w$, and let $\acc_w = \{ \branch\in \{0,1\}^\omega\mid
    \treelim[t_w](\branch)\in \acc\}$. 
    The tree $t_w$, with
    probability $\frac{1}{2}$ on all edges, equipped with the acceptance
    condition $\acc_w$, can be seen as an infinite Markov chain which
    almost-surely fulfils its objective iff $w\in \aslang(\wauto)$,
    and fulfils it with positive probability iff $w\in\plang(\wauto)$.

    Similarly, we can define the infinite tree $t'_w :
    \{0,1\}^*\rightarrow Q\times Q_P$ as the tree of runs of $\wauto'$
    on $w$, using any partial order such that $(q_1,p_1) < (q_2,p_2)$ implies $q_1 <
    q_2$. Let also define the acceptance condition $\acc'_w = \{ \branch\in \{0,1\}^\omega\mid
    \treelim[t'_w](\branch)\models \alpha'\}$, which by definition of $\alpha'$ is
    equal to 
    $\{ \branch\in \{0,1\}^\omega\mid \proj_Q(t'_w(\branch))\in
    \acc\}$, where $\proj_Q(t'_w(\branch))$ is the letter-by-letter
    projection of $t'_w(\branch)$ on the $Q$-component. 
    Equipped with $\frac{1}{2}$ probabilities on edges and this
    acceptance condition, $t'_w$ can be seen as an infinite Markov
    chain which almost-surely fulfils its objective iff $w\in
    \aslang(\wauto')$.

    Finally, note that $t_w$ and $t'_w$ are isomorphic, and the
    projection  
    $\proj_Q : Q\times Q_P\rightarrow Q$ 
    allows to
    get $t_w$ from $t'_w$ (by projecting its labels). Moreover, by
    definition of $t'_w$, we also have that $\acc'_w = \acc_w$. Hence,
    seen as infinite Markov chains, $t_w$ and $t'_w$ are the same (up
    to isomorphism). As a consequence, $w\in \aslang(\wauto)$ iff
    $w\in\aslang(\wauto')$, and for the same reason also $w\in \plang(\wauto)$ iff
    $w\in\plang(\wauto')$. 
\end{proof}

\subsection{From words to trees}
\label{sec-words-trees}

In this section we use Theorem~\ref{thm-as-parity} to establish 
an undecidability result for tree automata, but before we recall the
following result which we will use in the proof.

For every probabilistic parity word automaton (PPW) $\wauto = (Q,\Sigma,\delta,q_\init,\acc)$, we define the probabilistic parity tree automaton (PPT) $\auto_\wauto =
(Q,\Sigma,\delta',q_\init,\acc)$ such that for all $p,q\in Q$ and $a\in
\Sigma$,
\begin{itemize}
\item $\delta'(p,a,q,q)=\delta(p,a,q)$, and
\item $\delta'(p,a,q,q')=0$ for $q\neq q'$.
\end{itemize}

\begin{proposition}{\cite{carayol2014randomization}}\label{prop:haddad}
    $\aslang(\wauto)=\varnothing$ iff $\qaslang(\auto_\wauto)=\varnothing$. 
\end{proposition}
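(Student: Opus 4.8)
The plan is to derive both directions from the Markov-chain characterisation of qualitative almost-sure acceptance, Proposition~\ref{prop-45} (which applies since $\auto_\wauto$ has a parity, hence $\omega$-regular, acceptance condition). First I would fix a tree $t$ and unfold Definition~\ref{def-mk-chain-acc} for the automaton $\auto_\wauto$. Since every transition of $\auto_\wauto$ has the shape $\delta'(p,a,q,q)=\delta(p,a,q)$, the only states of $\mkcAt{\auto_\wauto}$ of the form $(p,q_0,q_1,\node)$ reachable with positive probability satisfy $q_0=q_1$; consequently a run of this chain is faithfully described by a pair $(\branch,r)$, where $\branch=\branch_0\branch_1\cdots\in\{0,1\}^\omega$ records the successive left/right choices and $r=q_\init\, r_1\, r_2\cdots\in Q^\omega$ is the projection of the run onto $Q$. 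Under the chain's measure, $\branch$ is distributed according to the coin-flipping measure $\mu$ on $\{0,1\}^\omega$, and conditionally on $\branch$ the state $r_{n+1}$ is drawn with probability $\delta\bigl(r_n,\,t(\branch_0\cdots\branch_{n-1}),\,r_{n+1}\bigr)$; moreover the objective $\acc'$ is met exactly when $r\in\acc$.

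The crucial observation is that the word of labels met along $\branch$ is precisely $\treelim[t](\branch)$. Hence, conditionally on $\branch$, the sequence $r$ is distributed exactly like a run of the word automaton $\wauto$ on $\treelim[t](\branch)$, and the conditional probability that $\acc'$ is satisfied equals the value $\wauto(\treelim[t](\branch))$. Integrating over $\branch$ (by the tower rule), the probability that $\mkcAt{\auto_\wauto}$ fulfils its objective equals $\int_{\{0,1\}^\omega}\wauto(\treelim[t](\branch))\,d\mu(\branch)$. As $0\le\wauto(\cdot)\le 1$, this integral equals $1$ iff $\wauto(\treelim[t](\branch))=1$ for $\mu$-almost every $\branch$. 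Together with Proposition~\ref{prop-45} this yields, for every tree $t$,
\[
  t\in\qaslang(\auto_\wauto)\quad\Longleftrightarrow\quad \treelim[t](\branch)\in\aslang(\wauto)\ \text{ for $\mu$-almost every branch }\branch.
\]

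The proposition then follows immediately. If $\aslang(\wauto)\neq\varnothing$, pick $w\in\aslang(\wauto)$ and let $t_w$ be the tree defined by $t_w(\node)=w_{|\node|}$; then $\treelim[t_w](\branch)=w$ for every branch $\branch$, so $t_w\in\qaslang(\auto_\wauto)$ and thus $\qaslang(\auto_\wauto)\neq\varnothing$. Conversely, if $t\in\qaslang(\auto_\wauto)$, the set of branches $\branch$ with $\treelim[t](\branch)\in\aslang(\wauto)$ has $\mu$-measure one, hence is non-empty, so $\aslang(\wauto)\neq\varnothing$.

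I expect the only delicate step to be the middle one: verifying, from Definition~\ref{def-mk-chain-acc} and the definition of the run measure of a probabilistic word automaton, that conditioning $\mkcAt{\auto_\wauto}$ on the branch really reproduces the run distribution of $\wauto$ on $\treelim[t](\branch)$. This is a routine comparison of the two inductively defined cone measures, after which the tower-rule step is immediate. A conceptual point worth making explicit is that a run of $\auto_\wauto$ on a tree need not be constant along levels (distinct nodes at the same depth may carry different states), so the correspondence with $\wauto$ genuinely holds branch by branch rather than over the whole tree; this is exactly why the qualitative, almost-sure-over-branches tree semantics is the one that matches the almost-sure word semantics.
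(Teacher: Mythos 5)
Your proof is correct, and it reaches the same pivotal characterisation as the paper — namely that $t\in\qaslang(\auto_\wauto)$ iff $\treelim[t](\branch)\in\aslang(\wauto)$ for $\mu$-almost every branch $\branch$ — followed by the same endgame (the constant-labelled tree $t_w$ for one direction, non-emptiness of a measure-one set of branches for the other). The difference is how that characterisation is obtained: the paper simply cites it as Proposition~43 of \cite{carayol2014randomization}, whereas you re-derive it from Proposition~\ref{prop-45} by analysing the acceptance Markov chain $\mkcAt{\auto_\wauto}$, observing that since $\delta'(p,a,q,q')=0$ whenever $q\neq q'$ the chain disintegrates into an independent coin-flipping branch component and, conditionally on the branch, the run distribution of $\wauto$ on $\treelim[t](\branch)$, so that the acceptance probability is the integral $\int\wauto(\treelim[t](\branch))\,d\mu(\branch)$, equal to $1$ iff the integrand is $1$ almost everywhere. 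The paper's route is shorter and leans on the literature; yours is self-contained modulo Proposition~\ref{prop-45} (which the paper already uses for Lemma~\ref{lem-ta}), at the cost of having to actually carry out the cone-measure comparison and tower-rule step you flag as routine — including the measurability of $\branch\mapsto\wauto(\treelim[t](\branch))$ — which is exactly the content of the result the paper cites. Your closing remark that the correspondence with $\wauto$ holds branch by branch (runs of $\auto_\wauto$ need not be level-constant) is a good observation and is implicit in the cited proposition.
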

\begin{proof}
    In \cite[Proposition 43]{carayol2014randomization}, it is shown that
    $\qaslang(\auto_\wauto)$ is equal to the set of $\Sigma$-trees $t$ such
    that the measure of the branches $\branch$ of $t$ such that $\treelim(\branch)\in
    \aslang(\wauto)$ is $1$.
This immediately yields the result. Indeed, if
    $\aslang(\wauto)=\varnothing$, then no such tree $t$ exists. Conversely,
    if $\aslang(\wauto)$ contains one word $w$, it suffices to construct the
    $\Sigma$-tree $t$ such that for all $\branch\in\{0,1\}^\omega$, we have $\treelim(\branch) =
    w$. Clearly, the measure of the branches $\branch$ of $t$ such that
    $\treelim(\branch)\in \aslang(\wauto)$ is $1$, and therefore $t\in
    \aslang(\auto_\wauto)$.
\end{proof}



We now describe a different construction that translates a binary-branching PPW
$\wauto = (Q,\Sigma,\delta,q_\init,\acc)$ to a PPT $\Aswitch$, which we then show to be
equivalent to $\auto_\wauto$. The PPT 
 $\Aswitch$ is defined as the tuple
 $(Q,\Sigma,\delta',q_\init,\acc)$ where for all states $q,q_1,q_2\in
 Q$ and $a\in
\Sigma$,
\begin{itemize}
\item $\delta'(p,a,q_1,q_2)=\delta(p,a,q_2,q_1)=\frac{1}{2}$, whenever
  $\Delta(q,a) = \{ q_1,q_2\}$,
\item $\delta'(p,a,q_1,q_2)=0$ otherwise.
\end{itemize}

We have the following result:
\begin{lemma}\label{lem-ta}
    Let $\wauto$ be a binary-branching probabilistic parity word
    automaton. Then 
    \[
    \qaslang(\auto_\wauto) = \qaslang(\Aswitch).
    \]
\end{lemma}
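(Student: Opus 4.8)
The plan is to show that the two probabilistic parity tree automata $\auto_\wauto$ and $\Aswitch$ have exactly the same qualitative almost-sure language by invoking Proposition~\ref{prop-45}: for an arbitrary input tree $t$, both $t\in\qaslang(\auto_\wauto)$ and $t\in\qaslang(\Aswitch)$ are characterised by the almost-sure fulfilment of the objective in the associated Markov chains $\mkcAt[t]{\auto_\wauto}$ and $\mkcAt[t]{\Aswitch}$. So it suffices to show that for every $\Sigma$-tree $t$, the Markov chain $\mkcAt[t]{\auto_\wauto}$ almost-surely fulfils its objective if and only if $\mkcAt[t]{\Aswitch}$ does. The intuition is that $\auto_\wauto$ copies the same state to both children, so the coin flip in the Markov chain that chooses child $0$ or child $1$ is irrelevant to which $Q$-state is reached, whereas $\Aswitch$ additionally uses the tree's branching coin to pick \emph{which} of the two states $\{q_1,q_2\}$ of $\Delta(q,a)$ to follow; but in both cases, from a state $(q,\node)$ reading $t(\node)=a$ with $\Delta(q,a)=\{q_1,q_2\}$, the next $Q$-state is $q_1$ with probability $\tfrac12$ and $q_2$ with probability $\tfrac12$, just as in the word automaton $\wauto$.

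Concretely, first I would fix $t$ and note that in $\mkcAt[t]{\auto_\wauto}$, from $(q,\node)$ one goes to $(q,q',q',\node)$ with probability $\delta(q,a,q')$ (where $a=t(\node)$), and then deterministically-in-$Q$ to either $(q',\node\cdot0)$ or $(q',\node\cdot1)$, each with probability $\tfrac12$. In $\mkcAt[t]{\Aswitch}$, from $(q,\node)$ one goes to $(q,q_1,q_2,\node)$ with probability $\tfrac12$ (resp. $(q,q_2,q_1,\node)$ with probability $\tfrac12$), when $\Delta(q,a)=\{q_1,q_2\}$, and then to $(q_1,\node\cdot0)$ or $(q_2,\node\cdot1)$, each with probability $\tfrac12$. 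I would then define a map sending a run of one chain to a run of the other by tracking only the $Q$-labels and the node components (i.e. forgetting the intermediate $Q^3$-labels), and argue it induces a measure-preserving correspondence (modulo a measure-zero set of runs using probability-zero transitions, cf. Remark~\ref{rem-measure-runs-trees}) between the two run spaces, under which the objectives $\acc'$ match, because both objectives only look at the projected $Q$-sequence and both chains produce the same distribution over $Q$-sequences along branches. Since the parity objective depends only on the $\infr$ of this projected sequence, almost-sure fulfilment transfers in both directions, giving $t\in\qaslang(\auto_\wauto)\iff t\in\qaslang(\Aswitch)$, hence the claimed equality of languages.

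An alternative, perhaps cleaner, route is to avoid Markov chains and argue directly on runs: a run of $\auto_\wauto$ on $t$ is (up to relabelling children) essentially a function assigning to each node a $Q$-state consistent with $\delta$, and similarly for $\Aswitch$; I would exhibit a measure-isomorphism between $\Runs[\auto_\wauto]{t}$ and $\Runs[\Aswitch]{t}$ — intuitively, in $\Aswitch$ the randomness that $\auto_\wauto$ spends choosing $q'$ at a node with $\delta(q,a,q)=\tfrac12=\delta(q,a,q'')$ is instead realised by the branching coin, but the \emph{value} $\mu$ assigned to any cone of (qualitatively accepting) runs is unchanged, and the set of branches satisfying the parity condition in corresponding runs has the same $\mu$-measure — so qualitatively accepting runs correspond to qualitatively accepting runs and $\mut$ is preserved. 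Either way, I expect the main obstacle to be the bookkeeping: making the correspondence between runs (and between branches within corresponding runs) precise enough that one can genuinely conclude equality of the relevant measures, while correctly handling the auxiliary $Q^3$-states of the Markov chain construction and the measure-zero set of runs through zero-probability transitions. I would lean on the Markov-chain characterisation (Proposition~\ref{prop-45}) precisely because it reduces everything to comparing two explicitly given countable Markov chains whose reachable parts, once the auxiliary states are contracted, are visibly the same stochastic process on $Q\times\words$.
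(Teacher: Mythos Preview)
Your proposal is correct and takes essentially the same approach as the paper: invoke Proposition~\ref{prop-45}, then compare the two Markov chains $\mkcAt[t]{\auto_\wauto}$ and $\mkcAt[t]{\Aswitch}$ by contracting the intermediate $Q^3\times\{0,1\}^*$ states to see that both collapse to the same stochastic process on $Q\times\{0,1\}^*$ (with uniform $\tfrac14$ transitions to the four states $(q_i,\node\cdot j)$). The paper makes this explicit by naming that contracted chain $\mkcAt{}$ and arguing that each of the original chains almost-surely fulfils its objective iff $\mkcAt{}$ does, which is exactly your final paragraph's plan.
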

\begin{proof}
 The only difference between
$\auto_\wauto$ and        $\Aswitch$   is that 
transitions in $\auto_\wauto$ of the form
        \[
        (q,a,q_1,q_1)\text{ and } (q,a,q_2,q_2)  \text{, each with
          probability }\frac{1}{2},
      \]
become in $\Aswitch$ transitions of the form
        \[
        (q,a,q_1,q_2)\text{ and } (q,a,q_2,q_1) \text{, each with
          probability }\frac{1}{2}.
        \]
      We show that for every tree $\tree$,  the acceptance Markov
      chains $\mkcAt{\auto_\wauto}$ and $\mkcAt{\Aswitch}$ are
      essentially the same. To do so, we construct a Markov chain
      $\mkcAt{}$  that almost-surely fulfils its
      objective iff $\mkcAt{\auto_\wauto}$ 
      almost-surely does, and similarly $\mkcAt{}$ almost-surely fulfils its
      objective iff $\mkcAt{\Aswitch}$ 
      does.
      As a consequence, $\mkcAt{\auto_\wauto}$ almost-surely
      fulfils its objective iff $\mkcAt{\Aswitch}$ does. Hence, by
      Proposition~\ref{prop-45}, we
      get that $\tree\in \qaslang(\Aswitch)$ iff $\tree\in
      \qaslang(\auto_\wauto)$.

      Let us now show how to construct 
      $\mkcAt{}$. We let 
      $$
      \mkcAt{} = (Q\times \{0,1\}^*,
      (q_\iota,\epsilon), \delta_{\mkcAt{}}, \acc_{\mkcAt{}})
      $$ 
      where 
      $\delta_{\mkcAt{}}((q,\node),s)=\frac{1}{4}$ for $s\in\{(q_1,\node\cdot
      0),(q_1,\node\cdot 1),(q_2,\node\cdot 0),(q_2,\node\cdot 1)\}$,
      with $\delta(q,\tree(\node))=\{q_1,q_2\}$, and $\acc_{\mkcAt{}} = \{ \rho\in (Q\times \{0,1\})^*\mid \proj_Q(\rho)\in \acc\}$.

      Observe that $\mkcAt{}$ can be obtained either from $\mkcAt{\auto_\wauto}$
or $\mkcAt{\Aswitch}$ by removing states of
      type $Q^3\times \{0,1\}^*$  and, for each such state,  attaching its children
      to its parent,
      as illustrated in Figure \ref{fig-transfo-mkc}. Indeed, since we have binary
      branching, 
      and by construction of $\Aswitch$, 
      in $\mkcAt{\Aswitch}$ each state of the form $(q,\node)$
      has exactly two successors with $\frac{1}{2}$ probability, of the form
      $(q,q_1,q_2,\node)$ and $(q,q_2,q_1,\node)$. From $(q,q_i,q_j,\node)$, we
      have two $\frac{1}{2}$ probability transitions, one to $(q_i,\node\cdot 0)$
      and one to $(q_j,\node\cdot 1)$. Thus from state $(q,\node)$ we have
      probability $\frac{1}{4}$ to reach each of the states in $\{(q_1,\node\cdot
      0),(q_2,\node\cdot 1),(q_2,\node\cdot 0),(q_1,\node\cdot
      1)\}$, and it is also the case in $\mkcAt{\auto_\wauto}$. Finally, the acceptance condition of
      $\mkcAt{\auto_\wauto}$ and $\mkcAt{\Aswitch}$ are the same as in $\mkcAt{}$, modulo
      projecting paths on states of type $Q\times
      \{0,1\}^*$. Therefore, one gets that $\mkcAt{\Aswitch}$
      almost-surely fulfils its objective iff $\mkcAt{}$ almost-surely
      fulfils its objective iff $\mkcAt{\auto_\wauto}$ almost-surely fulfils its objective.
\end{proof}

We now establish the main result of this section.

\begin{theorem}
  \label{theo-stage}
The problem whether  
$\qulang(\auto)= \varnothing$ is undecidable for parity tree automata. 
\end{theorem}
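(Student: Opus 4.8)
The plan is to reduce almost-sure emptiness of binary-branching parity word automata, undecidable by Theorem~\ref{thm-as-parity}, to $\qulang$-emptiness of parity tree automata. Start from a binary-branching probabilistic parity word automaton $\wauto=(Q,\Sigma,\delta,q_\iota,\acc)$, build the probabilistic parity tree automaton $\Aswitch$ as above, and let $\mathcal B=(Q,\Sigma,\trans,q_\iota,\acc)$ be the (non-probabilistic) parity tree automaton whose transition relation is the support of $\Aswitch$'s transition function: $\trans$ contains $(q,a,q_1,q_2)$ and $(q,a,q_2,q_1)$ exactly when $\delta(q,a)=\{q_1,q_2\}$. Since $\Aswitch$ uses only probabilities $0$ and $\tfrac{1}{2}$, $\mathcal B$ and $\Aswitch$ have exactly the same runs on every tree, and a run is qualitatively accepting for one iff for the other (qualitative acceptance refers only to the coin-flipping measure on branches of the run and to $\acc$). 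The goal is to prove
\[
  \qulang(\mathcal B)=\varnothing \iff \qaslang(\Aswitch)=\varnothing ,
\]
which, by Lemma~\ref{lem-ta} and Proposition~\ref{prop:haddad}, is equivalent to $\aslang(\wauto)=\varnothing$; as $\wauto\mapsto\mathcal B$ is effective and $\mathcal B$ is a parity tree automaton, undecidability of $\qulang$-emptiness follows.

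One direction is immediate: if $t\in\qulang(\mathcal B)$ then every run of $\mathcal B$ — equivalently of $\Aswitch$ — on $t$ is qualitatively accepting, so the set of qualitatively accepting runs of $\Aswitch$ on $t$ has $\mut$-measure $1$, i.e.\ $t\in\qaslang(\Aswitch)$.

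For the converse, assume $\qaslang(\Aswitch)\neq\varnothing$; by Lemma~\ref{lem-ta} and Proposition~\ref{prop:haddad} this gives $\aslang(\wauto)\neq\varnothing$, so fix $w\in\aslang(\wauto)$, i.e.\ $\wauto(w)=1$. As in the proof of Proposition~\ref{prop:haddad}, let $t'$ be the tree with $t'(\node)=w_{|\node|}$, so that $\treelim(\branch)=w$ for every branch $\branch$. I would then show that \emph{every} run $\rho$ of $\mathcal B$ on $t'$ is qualitatively accepting, whence $t'\in\qulang(\mathcal B)$. Fix $\rho$. For any branch $\branch=\branch_0\branch_1\cdots$, the sequence $\treelim[\rho](\branch)$ is a run of $\wauto$ on $w$, since $\rho(\node\cdot i)\in\delta(\rho(\node),t'(\node))$ and $t'$ is constant on each level. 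The key observation is that, when $\branch$ is drawn by fair coin flips, $\treelim[\rho](\branch)$ is distributed exactly as a $\wauto$-run on $w$ under $\wauto$'s run measure: because $\mathcal B$ is ``switching'', the successor chosen by $\treelim[\rho](\branch)$ at level $i$ is determined by $\branch_i$ XOR-ed with a bit $s$ recording which of the two orderings $\rho$ uses at node $\branch_0\cdots\branch_{i-1}$; that bit is a function of $\branch_0,\dots,\branch_{i-1}$, while $\branch_i$ is an independent fair coin, so the successive XORed choices are i.i.d.\ uniform — exactly how a random $\wauto$-run on $w$ is generated, since each state has two $\tfrac{1}{2}$-successors. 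Therefore $\mu(\{\branch\mid\treelim[\rho](\branch)\in\acc\})=\wauto(w)=1$, for every $\rho$.

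The main obstacle is this coupling argument, and it is precisely what motivates the switching construction $\Aswitch$ rather than $\auto_\wauto$: we need a tree on which \emph{every} run — not merely almost every run — is qualitatively accepting, i.e.\ on which the ordering choices made by a run become irrelevant after averaging over a random branch; the ``switch'' is what achieves this. The remaining points are routine bookkeeping: $\mathcal B$ carries $\wauto$'s parity condition, $\mathcal B$ and $\Aswitch$ share runs and the notion of qualitative acceptance, and the whole reduction is computable.
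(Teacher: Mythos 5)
Your proposal is correct and follows essentially the same route as the paper: the same ``switching'' parity tree automaton (your $\mathcal B$ is exactly the paper's $\auto$), the same use of Lemma~\ref{lem-ta} and Proposition~\ref{prop:haddad} to pass from a tree in $\qulang$ back to a word in $\aslang(\wauto)$, and for the other direction the same constant-word tree on which every run is shown qualitatively accepting. Your XOR coupling argument is just the probabilistic phrasing of the paper's computation that the run, viewed as a map from branches to word-automaton runs, pushes the coin-flipping measure $\mu$ forward to $\mu_w$ (verified there on cones via Carath\'eodory's extension theorem).
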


\begin{proof}
    We reduce the almost-sure emptiness problem of probabilistic parity word
    automata with binary branching, which is undecidable by
    Theorem~\ref{thm-as-parity}. 
    Let $\wauto = (Q,\Sigma,\delta,q_\init,\acc)$ be a probabilistic
    parity word automaton with binary branching. 
    Construct a (non-probabilistic) parity tree automaton $\auto = (Q,\Sigma,\Delta,q_\init,\acc)$ where 
    \[
    \Delta = \{ (q,a,q_1,q_2),(q,a,q_2,q_1)\mid \delta(q,a)=\{q_1,q_2\}\}.
    \]
    We claim that $\aslang(\wauto)=\varnothing$ iff
    $\qulang(\auto)=\varnothing$. 

    \begin{enumerate}

    \item
      $\exists w\in \aslang(\wauto)\implies \exists t\in
      \qulang(\auto)$: Assume that $w\in\aslang(\wauto)$. Construct
      the tree $\tree$ such that for all branches $\branch$, we have
      $t(\branch) = w$. Take any run $\run$ of $\auto$ on $\tree$, and
      define the set 
      $Y=\{\branch\in\{0,1\}^\omega\mid\treelim[\run](\branch)\in\acc\}$
      of accepting branches in $\run$. By definition of
      $\auto$, the run $\run$ (lifted to infinite sequences) is a bijection  
      between $\{0,1\}^\omega$ and
      $\Runs[\wauto]{w}$ that preserves acceptance (i.e., $r(Y) = \AccRuns[\wauto]{w}$), 
      and it also
      induces a bijection
      $f:\node\mapsto \run(\epsilon)\run(\node_0)\ldots\run(\node_0\ldots\node_{|\node|-1})$
      between $\words$ and finite prefixes of runs in
      $\Runs[\wauto]{w}$. 
      We show that $\run$ is measurable, and that $\mu_w$ is the image measure
      of $\mu$ under $\run$, \ie $\mu \circ \run^{-1}=\mu_w$. We
      then conclude that 
      $\mu(Y) = 
      \mu\circ\run^{-1}(\AccRuns[\wauto]{w}) = 
      \mu_w(\AccRuns[\wauto]{w}) = 1$, as required.

      To see that $\run$ is measurable, it is enough to see that for every cone
      $\cone(\rho)\subseteq\Runs[\wauto]{w}$, where $\rho$ is a finite
      prefix of a run in $\Runs[\wauto]{w}$, we have
      $\treelim[\run]^{-1}(\cone(\rho))=\cone(f^{-1}(\rho))$.
      
      We now show that $\mu\circ \run^{-1}$ and $\mu_w$ coincide on
      cones. Then, by Carath\'eodory's unique extension theorem, we
      get that they coincide on all measurable sets.  Let
      $\node\in\words$, and recall that $f$ is a bijection between
      $\{0,1\}^*$ and finite prefixes of runs in
      $\Runs[\wauto]{w}$. On the one hand, because all (non-zero)
      transitions in $\wauto$ have probability $\frac{1}{2}$ and by
      definition of $f$, we have
      $\mu_w(\cone(f(\node)))=\frac{1}{2^{|\node|}}$.  On the other
      hand, by definition of $\run$ and $f$, we have $\mu\circ
      \run^{-1}(\cone(f(\node)))=\mu(\cone(\node))=\frac{1}{2^{|\node|}}$,
      which concludes the proof.
      

      \item $\exists t\in
        \qulang(\auto)\implies \exists w\in \aslang(\wauto)$. Assume
        that $t\in
        \qulang(\auto)$. We
        show that also $t\in \qaslang(\auto_\wauto)$, where
        $\auto_\wauto$ is defined from $\wauto$ as in
        Proposition~\ref{prop:haddad}, from which we get the existence of some $w\in
        \aslang(\wauto)$. 

        Consider automaton $\Aswitch$, defined from $\wauto$ as in
        Lemma~\ref{lem-ta}, and observe that it is a probabilistic version
        of $\auto$ with binary branching. In particular they have same
        states, transitions (except for probabilities), runs, and
        acceptance conditions. 
        Since $t\in \qulang(\auto)$, we also have $t\in
        \qaslang(\Aswitch)$. Indeed, the set of qualitatively accepting runs of
        $\Aswitch$ over $\tree$ is equal to the set of qualitatively accepting
        runs of $\auto$ over $t$. Since
        $\auto$ accepts with a universal condition,
        all runs of $\auto$ over $\tree$ are qualitatively accepting,
        hence the set of qualitatively accepting runs has measure
        $1$. 
        Finally, by Lemma~\ref{lem-ta}, we know that 
        $\qaslang(\Aswitch) = \qaslang(\auto_\wauto)$, hence
        we get that $\tree\in \qaslang(\auto_\wauto)$, concluding the proof.

    \end{enumerate}
  \end{proof}

  Note that a close result was established
  in~\cite{fijalkow2013emptiness}. The difference with ours is that it
  considers alternating co-Büchi automata, while we consider universal
  parity ones. 

 \begin{figure}
     \centering
     \begin{subfigure}{.45\textwidth}
       \centering
        \begin{tikzpicture}
          [
          level distance          = 5em,
          edge from parent/.style = {draw, -latex},
          every node/.style       =
          {align=center,font=\footnotesize,inner sep=1pt},
          treenode/.style = {},
          level 1/.style = {sibling distance = 8em},
          level 2/.style = {sibling distance = 4em}
          ]
          \node at (-1.5,-0.3) {\Large $\mkcAt{\auto_\wauto}$};
          \node [treenode] {$(q,\node)$}
          child { node [treenode] {$(q,q_1,q_1,\node)$}
            child { node [treenode] {$(q_1,\node\cdot 0)$}
              edge from parent node [left=3pt,pos=.55] {$\frac{1}{2}$}}
            child { node [treenode] {$(q_1,\node\cdot 1)$}
              edge from parent node [right=2pt,pos=.55] {$\frac{1}{2}$}}
            edge from parent node [left=3pt,pos=.55] {$\frac{1}{2}$}}
          child { node [treenode] {$(q,q_2,q_2,\node)$}
            child { node [treenode] {$(q_2,\node\cdot 0)$}
              edge from parent node [left=3pt,pos=.55] {$\frac{1}{2}$}}
            child { node [treenode] {$(q_2,\node\cdot 1)$}
              edge from parent node [right=2pt,pos=.55] {$\frac{1}{2}$}}
            edge from parent node [right=3pt,pos=.55] {$\frac{1}{2}$}};          
        \end{tikzpicture}
      \end{subfigure}
           \begin{subfigure}{.45\textwidth}
       \centering
        \begin{tikzpicture}
          [
          level distance          = 5em,
          edge from parent/.style = {draw, -latex},
          every node/.style       =
          {align=center,font=\footnotesize,inner sep=1pt},
          treenode/.style = {},
          level 1/.style = {sibling distance = 8em},
          level 2/.style = {sibling distance = 4em}
          ]
          \node at (-1.5,-0.3) {\Large $\mkcAt{\Aswitch}$};
          \node [treenode] {$(q,\node)$}
          child { node [treenode] {$(q,q_1,q_2,\node)$}
            child { node [treenode] {$(q_1,\node\cdot 0)$}
              edge from parent node [left=3pt,pos=.55] {$\frac{1}{2}$}}
            child { node [treenode] {$(q_2,\node\cdot 1)$}
              edge from parent node [right=2pt,pos=.55] {$\frac{1}{2}$}}
            edge from parent node [left=3pt,pos=.55] {$\frac{1}{2}$}}
          child { node [treenode] {$(q,q_2,q_1,\node)$}
            child { node [treenode] {$(q_2,\node\cdot 0)$}
              edge from parent node [left=3pt,pos=.55] {$\frac{1}{2}$}}
            child { node [treenode] {$(q_1,\node\cdot 1)$}
              edge from parent node [right=2pt,pos=.55] {$\frac{1}{2}$}}
            edge from parent node [right=3pt,pos=.55] {$\frac{1}{2}$}};          
        \end{tikzpicture}
      \end{subfigure}
      \hfill
      
      \begin{subfigure}{.6\textwidth}
        \centering
        \begin{tikzpicture}
          [
          level distance          = 6em,
          edge from parent/.style = {draw, -latex},
          every node/.style       =
          {align=center,font=\footnotesize,inner sep=1pt},
          treenode/.style = {},
          level 1/.style = {sibling distance = 4em},
          level 2/.style = {sibling distance = 4em}
          ]
          \node at (0,1) {};
          \node at (-1,-0.1) {\Large $\mkcAt{}$};
          \node [treenode] {$(q,\node)$}
            child { node [treenode] {$(q_1,\node\cdot 0)$}
              edge from parent node [left=3pt,pos=.55] {$\frac{1}{4}$}}
            child { node [treenode] {$(q_1,\node\cdot 1)$}
              edge from parent node [left=2pt,pos=.55] {$\frac{1}{4}$}}
            child { node [treenode] {$(q_2,\node\cdot 0)$}
              edge from parent node [right=2pt,pos=.55] {$\frac{1}{4}$}}
            child { node [treenode] {$(q_2,\node\cdot 1)$}
              edge from parent node [right=4pt,pos=.55] {$\frac{1}{4}$}};
        \end{tikzpicture}
      \end{subfigure}
        \caption{From  $\mkcAt{\auto_\wauto}$ and $\mkcAt{\Aswitch}$  to $\mkcAt{}$}
   \label{fig-transfo-mkc}        
 \end{figure}

\section{\MSOzeropath on trees}
\label{sec-MSO}

The logic \MSOzero, introduced and studied in
\cite{michalewski2016measure,michalewski2018monadic}, extends
 \MSO with a probabilistic operator $\forall^{=1} X. \varphi$ which
 states that the set of sets satisfying $\varphi$ has Lebesgue-measure
$1$. It is proved in these papers that the \MSOzero-theory of the infinite
binary tree 
is undecidable. They also considered a variant of this
logic, denoted by \MSOzeropath, in which the quantification in the
probabilistic operator is restricted to sets of nodes that form a path. They proved
that, in terms of expressivity, this logic is between  \MSO and \MSOzero, with
a strict gain in expressivity compared to \MSO.
However, they left open the question of the decidability of its theory~\cite[Problem
4]{michalewski2018monadic}.  In this section we establish that it is
in fact undecidable, as a direct consequence of
Theorem~\ref{theo-stage}.

We recall,  from~\cite{michalewski2016measure},  the syntax and semantics of \MSOzeropath on the infinite binary tree.
The syntax of \MSOzeropath is given by the following grammar:
\[\varphi ::= \suc_0(x,y) \pipe \suc_1(x,y) \pipe x\in X 
\pipe \neg \varphi \pipe \varphi_1 \wedge \varphi_2 \pipe \forall x.\varphi \pipe 
\forall X.\varphi \pipe \forallonep X.\varphi\]
where $x$ ranges over a countable set of \emph{first order variables}, and $X$ ranges over a countable set of \emph{monadic second-order variables} (also called \emph{set variables}). The quantifier $\forallonep$ is called the \emph{path-measure quantifier}.

The semantics of \MSO on the infinite binary tree is defined by interpreting the first-order variables $x$ as elements of $\{0,1\}^*$, and the set 
variables $X$ as subsets of $\{0,1\}^*$. Ordinary quantification and the Boolean operations are defined as usual, 
$x \in X$ is interpreted as the membership relation, and $\suc_i$ (for $i = 0,1$) is interpreted as the binary relation 
$\{(x,x\cdot i)\pipe x\in \words\}$. 

We now describe how to interpret the quantification $\forallonep X. \varphi$. A set $X\subseteq \{0,1\}^{*}$ is a $\emph{path}$ if and only if: 
\begin{itemize}
\item $\epsilon\in X$,
\item if $v\in X$ and $w$ is a prefix of $v$ then $w\in X$,
\item if $v\in X$ then either $v\cdot 0\in X$ or $v\cdot 1\in X$, but
  not both.
\end{itemize}

We denote by $\Paths$ the set of all paths. Note that there is a one
to one correspondence between $\Paths$ and the set of branches
$\{0,1\}^\omega$. Thus, the coin-flipping measure $\mu$, defined over
$\{0,1\}^\omega$ (see Section~\ref{sec-tree-automata}), induces a
measure over $\Paths$, which we also denote by $\mu$.  We interpret
$\forallonep X. \varphi$ to mean that the $\mu$-measure of the set of
paths $X$ satisfying $\varphi$ is  $1$.


A sentence is a formula without free variables. The
\emph{\MSOzeropath-theory of the infinite binary tree} is the set of
all \MSOzeropath-sentences $\varphi$ that are true in
the 
infinite binary tree.

Our proof of undecidability will simulate tree automata in the logic. In order to do this, we 
identify sets $X$ with $\{0,1\}$-trees, i.e., the tree associated to $X$ has value $1$ at node $x$ iff $x \in X$. 
In the same way, we identify tuples of variables $X_1,\cdots, X_n$ and $\{0,1\}^n$-trees. This means that an \MSOzeropath formula $\varphi$ with free variables $X_1, \cdots, X_n$ 
can be interpreted on $\{0,1\}^n$-trees.


\begin{theorem}
  The \MSOzeropath-theory of the infinite binary tree is undecidable.
\end{theorem}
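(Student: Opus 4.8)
The plan is to derive this theorem as a direct consequence of Theorem~\ref{theo-stage}, which states that $\qulang$-emptiness is undecidable for parity tree automata. The key observation is that the acceptance condition of a qualitative universal parity tree automaton --- "every run has the property that the set of branches satisfying the parity condition has measure one" --- can be faithfully encoded as an \MSOzeropath-sentence, once we identify labelled trees with tuples of set variables as described just before the statement.

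First I would fix a parity tree automaton $\auto = (Q,\Sigma,\Delta,q_\init,\acc)$, say with $Q = \{q_1,\dots,q_m\}$, $\Sigma$ encoded by some fixed number of bits, and $\acc$ given by a parity (colouring) function $\alpha : Q \to \{0,\dots,k\}$. I would write an \MSOzeropath-sentence $\psi_\auto$, interpreted over $\{0,1\}^n$-trees for the appropriate $n$ encoding $\Sigma$, that says: \emph{for the input tree $t$, there exists a run $\rho$ of $\auto$ on $t$ such that $\rho$ is \textbf{not} qualitatively accepting}. Then $\qulang(\auto) = \varnothing$ iff every $\Sigma$-tree admits such a bad run, iff the sentence $\forall t.\, \psi_\auto$ (with the universal tree quantification over the free $\Sigma$-encoding variables expressed by ordinary \MSO set quantifiers) holds in the infinite binary tree. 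Since the $\Sigma$-encoding is purely syntactic, and since $\qulang(\auto)=\varnothing$ is undecidable, the \MSOzeropath-theory of the infinite binary tree is undecidable. The construction of $\psi_\auto$ goes through the following standard sub-formulas: (i) a guess of the run as a tuple of set variables $(S_{q_1},\dots,S_{q_m})$ encoding the $Q$-labelling of a $\setstates$-tree, together with an \MSO-formula $\mathrm{Run}_\auto(\bar S, t)$ asserting that the $S_{q_i}$ partition $\{0,1\}^*$, that $\epsilon \in S_{q_\init}$, and that at every node the transition taken belongs to $\Delta$ (this is a finite conjunction of local conditions, expressible in \MSO using $\suc_0,\suc_1$ and $x\in X$); (ii) a \emph{parity predicate} $\mathrm{Par}(X, \bar S)$ with one free set variable $X$ representing a path, stating that along the branch corresponding to $X$ the minimal colour seen infinitely often is even --- again a standard \MSO formula over the infinite binary tree, using the $S_{q_i}$ to read off colours along $X$; (iii) the negation of qualitative acceptance: $\neg\big(\forallonep X.\, \mathrm{Par}(X,\bar S)\big)$, which says precisely that the measure of the set of accepting branches of $\rho$ is \emph{not} $1$. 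Putting these together, $\psi_\auto := \exists \bar S.\big( \mathrm{Run}_\auto(\bar S, t) \wedge \neg\,\forallonep X.\,\mathrm{Par}(X,\bar S)\big)$, and the final sentence is $\forall \bar t.\, \psi_\auto$, where $\forall \bar t$ abbreviates universal quantification over the $n$ set variables encoding the input $\Sigma$-tree.

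The correctness argument is then immediate from the definitions: $\mathrm{Run}_\auto$ enumerates exactly the runs of $\auto$ on $t$; by the one-to-one correspondence between paths $X \in \Paths$ and branches in $\{0,1\}^\omega$ noted in the excerpt, and because the induced measure on $\Paths$ coincides with the coin-flipping measure $\mu$, the set of paths $X$ with $\mathrm{Par}(X,\bar S)$ true has $\mu$-measure $1$ iff the run $\rho$ encoded by $\bar S$ is qualitatively accepting; hence $\psi_\auto$ holds at $t$ iff some run of $\auto$ on $t$ fails to be qualitatively accepting, iff $t \notin \qulang(\auto)$; and $\forall \bar t.\,\psi_\auto$ holds in the infinite binary tree iff $\qulang(\auto) = \varnothing$. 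A decision procedure for the \MSOzeropath-theory would thus decide $\qulang$-emptiness for parity tree automata, contradicting Theorem~\ref{theo-stage}.

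I do not expect a serious obstacle here --- the only point requiring a little care is checking that the parity acceptance condition and the "exists a valid run" condition are genuinely \MSO-expressible over the infinite binary tree with signature $\{\suc_0,\suc_1,\in\}$; this is classical (Rabin), but I would spell out $\mathrm{Par}$ explicitly since it is the one place where an \emph{infinitely-often} condition must be encoded, typically via "there is an infinite subset of $X$ all of whose elements carry colour $\le c$, and for the even such $c$ this holds while for smaller $c$ no such infinite subset exists" or, more simply, by quantifying over the colour $c$ and asserting that colour $c$ is even, occurs infinitely often along $X$, and no colour smaller than $c$ occurs infinitely often along $X$. The genuinely new ingredient, the measure quantifier, is used in exactly one place and in exactly the way its semantics prescribes, so no additional work is needed beyond invoking Theorem~\ref{theo-stage}.
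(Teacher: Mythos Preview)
Your proposal is correct and follows essentially the same approach as the paper: reduce $\qulang$-emptiness (Theorem~\ref{theo-stage}) to the \MSOzeropath-theory by encoding runs as tuples of set variables, expressing the run and parity conditions in plain \MSO, and using a single occurrence of $\forallonep$ to capture qualitative acceptance. Your formula $\psi_\auto$ is literally the negation of the paper's $\varphi_\auto$ (via $\neg\forall\vec Y(\cdot\to\cdot)\equiv\exists\vec Y(\cdot\wedge\neg\cdot)$), so the two constructions coincide up to this trivial rewriting; your additional detail on how to spell out the parity condition in \MSO is fine and does not deviate from the paper's intent.
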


\begin{proof}
  The qualitative universal language of a parity tree automaton 
  automaton $\auto$ (over alphabet $\Sigma \subseteq \{0,1\}^n$ for a suitably large $n$) can be expressed in
  \MSOzeropath over the 
  infinite binary tree, i.e., 
  we can construct an \MSOzeropath 
  formula $\varphi_\auto(\vec X)$ such that 
  the set of $\{0,1\}^n$-trees $\vec X = (X_1,\cdots,X_n)$ satisfying $\varphi_\auto$ is equal to 
  $\qulang(\auto)$. The formula $\varphi_\auto(\vec X)$ is of the form 
  \[\forall \vec Y. (\text{``$\vec Y$ is a run of
    $\auto$ on $\vec X$''} \to \forallonep Z.(\text{``$Z$ is an
    accepting path of $\vec Y$''})),\]
where ``$\vec Y$ is a run of
    $\auto$ on $\vec X$'' and ``$Z$ is an
    accepting path of $\vec Y$'' can be expressed in \MSO for parity acceptance conditions (a similar encoding appears in \cite{michalewski2018monadic} for qualitative nondeterministic languages, and in \cite{Rabin:TAMS69} for nondeterministic Muller tree automata). Now, 
    note that $\forall \vec X. \neg \varphi_\auto(\vec X)$ holds in the infinite binary tree if and only if $\qulang(\auto)=\varnothing$. Thus, we have reduced the 
    problem of whether the qualitative universal language of a given parity tree automaton $\auto$ is empty, which is undecidable by Theorem~\ref{theo-stage}, to deciding if the \MSOzeropath sentence $\forall \vec X. \neg \varphi_\auto(\vec X)$ holds in 
    the 
    infinite binary tree. 
    Thus, the \MSOzeropath-theory of the infinite binary tree is undecidable.
\end{proof}




\bibliography{biblio}

\end{document}